\documentclass[journal]{IEEEtran}

\IEEEoverridecommandlockouts
\usepackage{cite}
\usepackage{amsmath,amssymb,amsfonts}
\usepackage{algorithmic}
\usepackage{graphicx}
\usepackage{textcomp}
\usepackage{xcolor}
\usepackage{algorithm}

\begin{document}

\title{Cooperative LEO-Terrestrial Multistatic ISAC: CRLB Analysis, Scaling Laws, and Satellite Selection}
 
\author{Yunhui~Li,~\IEEEmembership{Student Member,~IEEE}, Kaitao~Meng,~\IEEEmembership{Member,~IEEE}, Emad Alsusa,~\IEEEmembership{Senior Member,~IEEE} and Kaiting You,~\IEEEmembership{Student Member,~IEEE}

\thanks{Y.~Li, K.~Meng, E.~Alsusa and K. You are with the Department of Electrical and Electronic Engineering, The University of Manchester, Manchester, United Kingdom (e-mail: yunhui.li@manchester.ac.uk; kaitao.meng@manchester.ac.uk; e.alsusa@manchester.ac.uk; \\kaiting.you@manchester.ac.uk).}
}

% % The paper headers
% \markboth{Journal of \LaTeX\ Class Files,~Vol.~14, No.~8, August~2015}%
% {Shell \MakeLowercase{\textit{et al.}}: Bare Demo of IEEEtran.cls for IEEE Journals}

\maketitle
\begin{abstract}
Low Earth orbit (LEO) satellites provide elevated and spatially diverse viewpoints for enhancing three-dimensional (3-D) sensing in integrated satellite-terrestrial networks (ISTNs). This paper investigates a LEO-assisted terrestrial multistatic integrated sensing and communication (ISAC) network for 3-D target localisation, where multiple LEO satellites act as cooperative sensing illuminators and provide additional bistatic observations to distributed terrestrial radar receivers. To characterise the localisation performance, we first model cooperative satellites as a homogeneous Poisson point process (PPP) and derive a tractable approximation of the average hybrid Cramér–Rao lower bound (CRLB). The corresponding scaling laws show that the root-CRLB decreases proportionally to the inverse square root of the mean number of cooperative satellites when the cooperation region is fixed, whereas enlarging the cooperation radius at a fixed satellite density yields logarithmic diminishing returns. The analysis further reveals that the asymptotic vertical localisation gain depends critically on the terrestrial sensing geometry. To capture practical constellation characteristics, an Earth-curvature-aware Walker model is subsequently developed by incorporating orbital structure, satellite motion, visibility, and time-varying sensing geometry. A tractable approximation to the Walker-based hybrid CRLB is then derived. Analytical bounds on the marginal gain and a sufficient condition for ordering candidate satellites are also derived. Based on this analysis, a CRLB-oriented greedy satellite-selection strategy is proposed to jointly account for signal-to-clutter-plus-noise ratio (SCNR)-dependent reliability and geometry information complementarity with the existing terrestrial sensing configuration. The proposed selection strategy consistently outperforms benchmarks and achieves performance close to exhaustive search with substantially reduced selection complexity. The impact of terrestrial sensing information on satellite selection is also examined through satellite-only and hybrid-aware strategies. Monte Carlo simulations validate the analytical approximations for both models.
\end{abstract}

% Note that keywords are not normally used for peerreview papers.
\begin{IEEEkeywords}
Low Earth orbit satellites, integrated sensing and communication, integrated satellite-terrestrial network, Cramér-Rao lower bound, satellite selection, stochastic geometry
\end{IEEEkeywords}

\section{Introduction}
\IEEEPARstart{I}{ntegrated} sensing and communication (ISAC) has emerged as a promising paradigm for next-generation networks\cite{b25,b26}. It is anticipated to be a key functionality for future wireless networks, enabling communication services and environmental sensing to be supported through shared wireless resources\cite{b27}. For target localisation, terrestrial multistatic sensing can exploit spatially distributed radar receivers to acquire multiple bistatic observations, with the resulting localisation accuracy being fundamentally determined by the sensing geometry\cite{b28,b3}. For low-altitude aerial targets such as unmanned aerial vehicles (UAVs), terrestrial ISAC base stations (BSs) are typically configured with down-tilted beams to serve ground users. Consequently, an aerial target may lie outside the main lobe of the terrestrial sensing beam and be illuminated only through weak sidelobes, resulting in degraded sensing signal-to-clutter-plus-noise ratio (SCNR) and limited elevation-domain information\cite{b29}.

These limitations motivate the integration of complementary non-terrestrial sensing platforms into terrestrial ISAC networks. Low Earth Orbit (LEO) satellite networks are emerging as an important component of future wireless systems to support global coverage and enable ubiquitous connectivity\cite{b1,b14}. Satellite networks can provide robust communication services in areas where terrestrial deployment is insufficient, such as rural regions and disaster-affected areas\cite{b2}. By integrating satellite and terrestrial segments, integrated satellite-terrestrial networks (ISTNs) combine broad-area coverage with terrestrial network capacity. Beyond these communication benefits, the elevated and mobile viewpoints of LEO satellites provide spatially diverse observation geometries that are difficult to obtain from terrestrial infrastructure alone\cite{b23}.  Compared with higher-orbit systems, LEO links can offer lower propagation delay and path loss. Therefore, LEO systems have attracted interest from industry\cite{b24}. 

From an aerial-sensing perspective, LEO satellites employ Earth-pointing beams and observe the target from above, allowing aerial targets within their coverage footprints to be directly illuminated from elevated directions. The resulting satellite-target-receiver paths therefore provide two complementary benefits: they extend the effective illumination coverage beyond that of the terrestrial beams and introduce high-elevation bistatic observations that strengthen the otherwise weak vertical localisation geometry. These advantages motivate the incorporation of LEO satellites for reliable 3-D localisation. Global Navigation Satellite System (GNSS)-based multistatic radar studies have reported that signals from more than 30 GNSS satellites may be simultaneously visible from a terrestrial location\cite{b4}, illustrating the potential geometric diversity of satellite-borne illuminators. Recent studies have also demonstrated the feasibility of multiple-satellite cooperative sensing in LEO constellations. In particular, a dual-function LEO satellite constellation framework was proposed in~\cite{b5}, where multiple LEO satellites cooperate to provide communication services for multiple user devices and location sensing for a target of interest using the same spectrum resources. This motivates the use of cooperative LEO satellites as distributed sensing illuminators for enhancing the geometric diversity of satellite-assisted target localisation.

Stochastic geometry provides a tractable means of characterising uncertainty in satellite availability and spatial geometry. Early studies modelled LEO satellites as a binomial point process (BPP) on spherical surfaces and derived coverage probability for satellite communication systems~\cite{b8,b11,b12}. Conventional models represent satellites on a spherical surface, which provides tractable spatial statistics but does not explicitly preserve the orbital structure of practical satellite constellations. To address this limitation, Cox point process models have been proposed to jointly model both the distribution of orbits and the satellites positioned along them, thereby avoiding the fully random spatial placement assumption used in conventional point process models~\cite{b9,b13}.  Moreover, nonhomogeneous stochastic geometry models were further developed to capture the latitude-dependent distribution of inclined LEO constellations, where the satellite intensity varies with the actual spatial distribution induced by constellation altitude, size, and orbital inclination~\cite{b6}. Walker constellation models have also been studied from stochastic-geometry and dynamical-system perspectives, where satellite orbits are regularly spaced and satellites are periodically distributed along the orbits~\cite{b10,b20}. These orbit-aware models are more consistent with practical satellite constellations and can capture the time-varying geometry induced by orbital motion. However, existing analyses mainly focus on communication metrics such as coverage probability and interference distributions rather than satellite-assisted sensing and localisation. The impact of orbit-constrained LEO satellite geometry on the Fisher information matrix (FIM) and Cramér-Rao lower bound (CRLB) of satellite-assisted terrestrial multistatic sensing remains insufficiently explored.

In this paper, we investigate a LEO-assisted terrestrial multistatic ISAC network for 3-D target localisation, where multiple LEO satellites cooperate with the terrestrial sensing infrastructure by illuminating the targets and providing additional bistatic observations. These satellite-assisted sensing paths enrich the spatial diversity of the terrestrial multistatic system and improve the hybrid localisation capability. We analyse the proposed framework using both stochastic Poisson point process (PPP) and Earth-curvature-aware Walker constellation models. The local PPP model is analytically convenient for deriving tractable CRLB approximations and explicit scaling laws, but it abstracts the orbital-plane structure and satellite motion. In contrast, the Walker model preserves the orbit-constrained and time-varying sensing geometry, although it is less amenable to closed-form analysis. The main contributions of this work are summarised as follows:
\begin{itemize}

\item We establish a stochastic-geometry framework for LEO-assisted terrestrial multistatic localisation by modelling the ground projections of cooperative satellites as a homogeneous PPP. We derive a tractable closed-form approximation of the PPP-averaged hybrid CRLB. Moreover, we derive scaling laws that characterise how the 3-D localisation accuracy evolves with the mean number of cooperative satellites $\bar K$ and the satellite cooperation radius. For a fixed cooperation region, the root-CRLB scales as $\mathcal O(\bar K^{-1/2})$. For a fixed satellite density, enlarging the cooperation radius yields only an inverse-square-root logarithmic improvement. The analysis further reveals that sustained vertical information growth depends on the terrestrial sensing geometry, and that increasing the density of nearby cooperative satellites provides a faster asymptotic localisation gain than extending the cooperation region to include increasingly distant satellites.

\item We develop an Earth-curvature-aware Walker-constellation model to bridge the tractable local stochastic analysis with orbit-constrained LEO deployments. The model captures orbital-plane structure, Walker phasing, satellite motion, elevation-based visibility, and time-varying sensing geometry. We further derive a tractable approximation of the Walker-based hybrid CRLB. 

\item We propose a CRLB-oriented greedy satellite selection strategy that jointly accounts for SCNR-based reliability and geometric information complementarity. To characterise the selection method, we derive the exact marginal trace-CRLB reduction of each candidate satellite and analyse it through the eigenstructure of the current hybrid FIM. The resulting analysis establishes how directional complementarity and information saturation affect the marginal localisation gain, and further provides analytical bounds and a sufficient condition for ordering candidate satellites. The greedy method aims to minimise the hybrid CRLB, and it achieves nearly the same performance as exhaustive search while reducing computational complexity. We further investigate the role of terrestrial-satellite information complementarity by distinguishing satellite-only selection from hybrid-aware selection.

\end{itemize}
\begin{figure}
\centering
\includegraphics[width=0.7\linewidth]{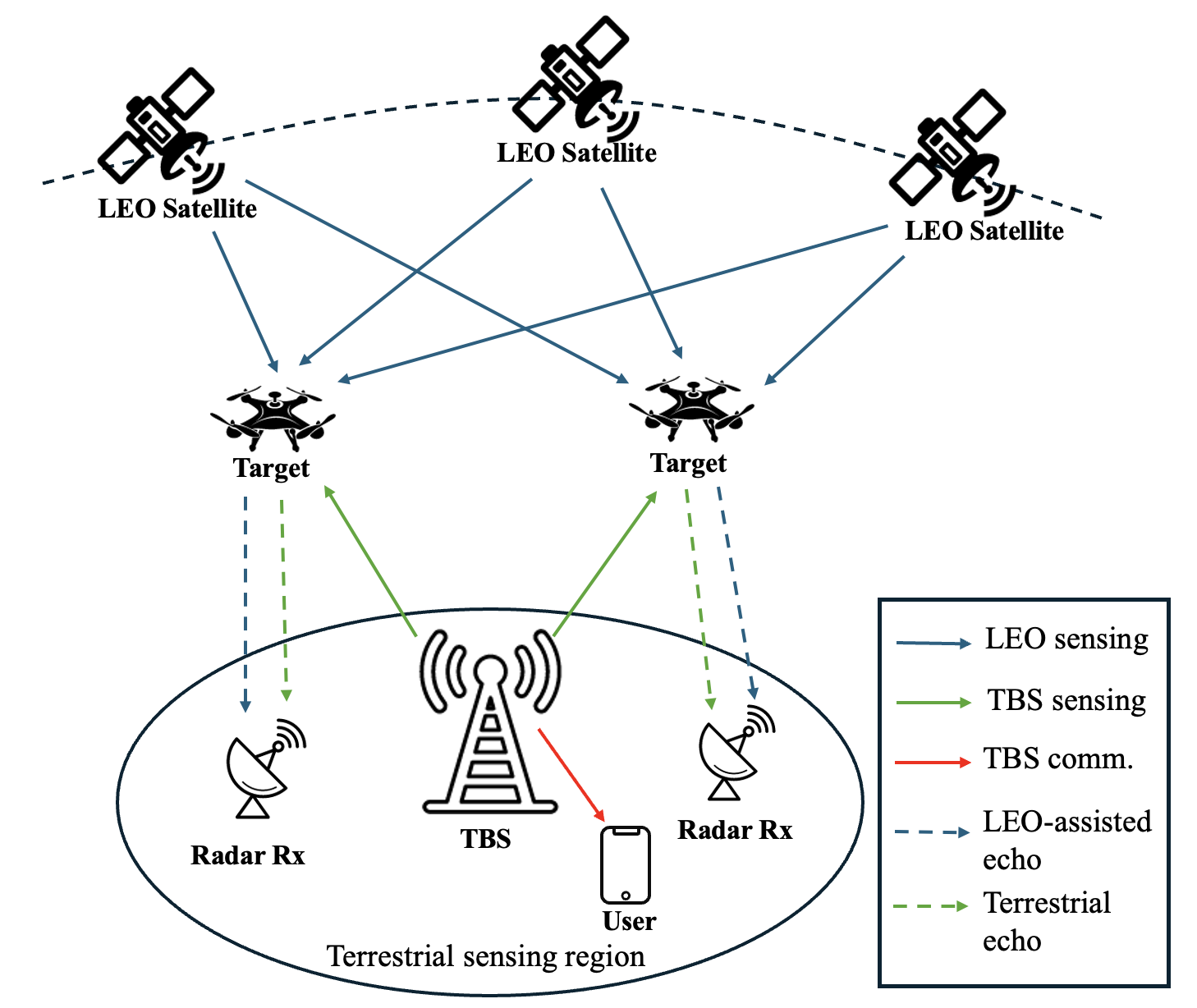} 
\caption{Illustration of the cooperative LEO-terrestrial multistatic ISAC system.}
\vspace{-1em}
\end{figure}
\vspace{-1em}
\section{System Model}
As shown in Fig. 1, we consider a satellite-assisted terrestrial multistatic ISAC network for 3-D target localisation. The terrestrial and satellite sensing branches are assumed to share the same frequency spectrum. The considered system consists of one terrestrial base station (TBS), a set of LEO satellites, multiple UAV targets, and multiple distributed radar receivers. The ISAC-enabled terrestrial network supports communication users while simultaneously transmitting radar signals for sensing targets. The TBS, each radar receiver, and each LEO satellite are equipped with $N_{\mathrm{TX}}$, $N_{\mathrm{RX}}$, and $N_{\mathrm{SAT}}$ antennas, respectively, where the radar receivers collect the echoes reflected by the targets and clutter scatterers. The array configurations of the TBS, radar receivers and LEO satellites are all assumed to be uniform rectangular arrays (URAs). The LEO satellites transmit sensing signals towards the targets. The target-reflected echoes generated by satellite illumination are collected by the distributed radar receivers, thereby forming additional bistatic sensing paths. Each satellite provides additional bistatic range observations through the distributed radar receivers for the targets, which enriches the observation geometry for 3-D localisation. After compensation for satellite-induced Doppler shifts and timing offsets, the terrestrial and satellite sensing signals are designed as mutually orthogonal waveforms. Therefore, the reflected echoes generated by the terrestrial and satellite sensing branches can be separated at the radar receivers.
\vspace{-1.2em}
\subsection{Channel Model}
Let $\mathbf{p}_{\mathrm{T}} \in \mathbb{R}^{3}$ denote the TBS position, $\mathbf{p}_{s} \in \mathbb{R}^{3}$ denote the position of the $s$th LEO satellite, $\mathbf{p}_{i} \in \mathbb{R}^{3}$ denote the position of the $i$th target, and $\mathbf{r}_{n} \in \mathbb{R}^{3}$ denote the position of the $n$th radar receiver. The position of the $c$th clutter scatterer around the targets is denoted by $\mathbf q_c\in\mathbb R^3$. The sets of satellites, sensing targets, radar receivers, and clutter scatterers are denoted by $\mathcal{S}=\{1,\ldots,N_{\mathrm{s}}\}$, $\mathcal{I}=\{1,\ldots,\mathrm{N_{tar}}\}$, $\mathcal{D}=\{1,\ldots,N_{\mathrm{r}}\}$, and $\mathcal{C}=\{1,\ldots,N_{\mathrm{cl}}\}$, respectively, where \(s\in\mathcal S\), \(i\in\mathcal I\), \(n\in\mathcal D\), and \(c\in\mathcal C\). Let $\mathbf a_{\mathrm T}(\boldsymbol{\theta})$, $\mathbf a_{\mathrm S}(\boldsymbol{\theta})$, and
$\mathbf a_{\mathrm R}(\boldsymbol{\theta})$ denote the array response vectors of the TBS, the satellite, and the radar receiver, respectively, where $\boldsymbol{\theta}$ denotes the corresponding azimuth-elevation direction. For the terrestrial sensing branch, the channel from the TBS to target $i$ and radar receiver $n$ is
\begin{equation}
    \mathbf{G}^{\mathrm{T,tar}}_{i,n}
    =
    \alpha^{\mathrm{T,tar}}_{i,n}
    \mathbf{a}_{\mathrm{R}}\!\left(\theta^{\mathrm{R}}_{i,n}\right)
    \mathbf{a}^{H}_{\mathrm{T}}\!\left(\theta^{\mathrm{T}}_{i}\right),
\end{equation}
where $\theta^{\mathrm{T}}_{i}$ is the direction from the TBS to the target $i$, $\theta^{\mathrm{R}}_{i,n}$ is the direction from the target $i$ to the radar receiver $n$, and $\alpha^{\mathrm{T,tar}}_{i,n}$ denotes the amplitude gain of the bistatic path from the TBS to the target $i$ and then to the radar receiver $n$, which includes path loss and radar cross-section (RCS) gain, and is modelled as\cite{b15}
\begin{equation}
    \alpha^{\mathrm{T,tar}}_{i,n}
    =
    \sqrt{
    \frac{
    \lambda_c^{2}\sigma_{i,n}^{2}G_{\mathrm{T}}^{\mathrm{v}}(\mathbf{p}_{i})
    }
    {
    (4\pi)^{3}d_{\mathrm{T},i}^{2}d_{i,n}^{2}
    }},
\end{equation}
where $ \lambda_c$, $\sigma_{i,n}^2$, $d_{\mathrm{T},i}^{2}$, and $d_{i,n}^{2}$ are the wavelength, RCS gain, distance from the TBS to the target $i$, and the distance from the target $i$ to the receiver $n$, respectively. The TBS employs a down-tilted vertical radiation pattern. $G^{\mathrm{v}}_{\mathrm{T}}(\mathbf{p}_{i})$ is the elevation-dependent TBS transmit gain towards target $i$. 

To represent practical multistatic ISAC environments, the additional echoes from unintended targets and clutter are considered. The terrestrial clutter channel from the TBS to clutter scatterer $c$ and then to radar receiver $n$ is expressed as
\begin{equation}
    \mathbf{G}^{\mathrm{T,cl}}_{c,n}
    =
    \alpha^{\mathrm{T,cl}}_{c,n}
    \mathbf{a}_{\mathrm{R}}\!\left(\theta^{\mathrm{R}}_{c,n}\right)
    \mathbf{a}^{H}_{\mathrm{T}}\!\left(\theta^{\mathrm{T}}_{c}\right),
\end{equation}
where $\theta^{\mathrm{T}}_{c}$ and $\theta^{\mathrm{R}}_{c,n}$ denote the corresponding transmit and receive directions, and $\alpha^{\mathrm{T,cl}}_{c,n}$ is the path gain from the clutter scatterer $c$  to the radar receiver $n$.

For the satellite-assisted sensing branch, the channel from satellite $s$ to target $i$ and radar receiver $n$ is expressed as
\vspace{-0.1em}
\begin{equation}
    \mathbf{H}^{\mathrm{S,tar}}_{s,i,n}
    =
    \alpha^{\mathrm{S,tar}}_{s,i,n}
    \mathbf{a}_{\mathrm{R}}\!\left(\theta^{\mathrm{R}}_{i,n}\right)
    \mathbf{a}^{H}_{\mathrm{S}}\!\left(\theta^{\mathrm{S}}_{s,i}\right),
\end{equation}
\vspace{-1em}

\noindent where $\theta^{\mathrm{S}}_{s,i}$ is the direction from satellite $s$ to target $i$. The coefficient $\alpha^{\mathrm{S,tar}}_{s,i,n}$ denotes the path gain associated with the satellite-assisted sensing path from satellite $s$ to target $i$ and subsequently to radar receiver $n$.

The satellite-assisted clutter channel from satellite $s$ to clutter scatterer $c$ and then to radar receiver $n$ is given by
\vspace{-0.4em}
\begin{equation}
    \mathbf{H}^{\mathrm{S,cl}}_{s,c,n}
    =
    \alpha^{\mathrm{S,cl}}_{s,c,n}
    \mathbf{a}_{\mathrm{R}}\!\left(\theta^{\mathrm{R}}_{c,n}\right)
    \mathbf{a}^{H}_{\mathrm{S}}\!\left(\theta^{\mathrm{S}}_{s,c}\right),
\end{equation} 
\vspace{-1.em}

\noindent where $\theta^{\mathrm{S}}_{s,c}$ is the direction from satellite $s$ to the clutter scatterer $c$ and the coefficient $\alpha^{\mathrm{S,cl}}_{s,c,n}$ is the path gain from satellite $s$ to the clutter scatterer $c$ and then to the radar receiver $n$. The remaining path-gain coefficients $\alpha^{\mathrm{T,cl}}_{c,n}$, $\alpha^{\mathrm{S,tar}}_{s,i,n}$, and $\alpha^{\mathrm{S,cl}}_{s,c,n}$ follow the same bistatic radar-equation form as (2), evaluated using the corresponding transmitter--scatterer and scatterer--receiver distances and the associated transmit and receive antenna gains.

\vspace{-1.2em}
\subsection{Signal Model}
Let $\mathcal{A}=\{1,\ldots,N_{\mathrm{com}}\}$ denote the set of terrestrial communication users. The ISAC-enabled TBS simultaneously transmits communication signals to terrestrial users and sensing signals towards the targets. The signal vector of the TBS is defined as $\mathbf{s}_{\mathrm{ter}}=[s^{\mathrm{c}}_{1},\ldots,s^{\mathrm{c}}_{N_{\mathrm{com}}},s^{\mathrm{ter}}_{1},\ldots,s^{\mathrm{ter}}_{N_{\mathrm{tar}}}]^{T}$, where $s^{\mathrm{c}}_{a}$ is the data stream for communication user $a$, and $s^{\mathrm{ter}}_{i}$ is the radar signal for target $i$. The corresponding beamforming matrix is denoted by $\mathbf{F}_{\mathrm{ter}}=[\mathbf{f}^{\mathrm{c}}_{1},\ldots,\mathbf{f}^{\mathrm{c}}_{N_{\mathrm{com}}},\mathbf{f}^{\mathrm{ter}}_{1},\ldots,\mathbf{f}^{\mathrm{ter}}_{N_{\mathrm{tar}}}]$, where $\mathbf{f}^{\mathrm{c}}_{a}$ and $\mathbf{f}^{\mathrm{ter}}_{i}$ are the beamforming vectors for user $a$ and target $i$, respectively. For satellite-assisted sensing, satellite $s$ transmits dedicated sensing signals towards the targets. The signal vector of satellite $s$ is defined as $\mathbf{s}^{\mathrm{sat}}_{s}=[s^{\mathrm{sat}}_{s,1},\ldots,s^{\mathrm{sat}}_{s,N_{\mathrm{tar}}}]^{T}$, where $s^{\mathrm{sat}}_{s,i}$ is the sensing signal transmitted by satellite $s$ for target $i$. The satellite sensing beamforming matrix is denoted by $\mathbf{F}^{\mathrm{sat}}_{s}=[\mathbf{f}^{\mathrm{sat}}_{s,1},\ldots,\mathbf{f}^{\mathrm{sat}}_{s,N_{\mathrm{tar}}}]$, where $\mathbf{f}^{\mathrm{sat}}_{s,i}$ is the beamforming vector used by satellite $s$ for target $i$. The beamforming vectors are normalised before power allocation. The TBS has a total transmit power $P_{\mathrm{TBS}}$, which is equally allocated among its communication and sensing beams, while the total transmit power $P_{\mathrm{sat}}$ of each LEO satellite is equally allocated among its sensing beams. The received signal of the terrestrial sensing branch at radar receiver $n$ is modelled as
\vspace{-0.3em}
\begin{equation}
\begin{aligned}
\mathbf{y}^{\mathrm{ter}}_{n}(t)
={}&
\sum_{i\in\mathcal{I}}
e^{-j\omega_c\tau^{\mathrm{ter}}_{i,n}}
\mathbf{G}^{\mathrm{T,tar}}_{i,n}
\mathbf{F}_{\mathrm{ter}}
\mathbf{s}_{\mathrm{ter}}
\left(t-\tau^{\mathrm{ter}}_{i,n}\right)
\\
&+
\sum_{c\in\mathcal{C}}
e^{-j\omega_c\tau^{\mathrm{ter}}_{c,n}}
\mathbf{G}^{\mathrm{T,cl}}_{c,n}
\mathbf{F}_{\mathrm{ter}}
\mathbf{s}_{\mathrm{ter}}
\left(t-\tau^{\mathrm{ter}}_{c,n}\right)
+
\mathbf{n}^{\mathrm{ter}}_{n}(t).
\end{aligned}
\end{equation}
where $\mathbf{y}^{\mathrm{ter}}_{n}$ is the received terrestrial-branch signal vector, $\mathbf{n}^{\mathrm{ter}}_{n}$ is the additive white Gaussian noise (AWGN) vector with variance $\sigma_{\mathrm{ter}}^{2}$ and $\omega_c=2\pi f_c$ is the carrier angular frequency. $\tau^{\mathrm{ter}}_{i,n}$ and $\tau^{\mathrm{ter}}_{c,n}$ denote the bistatic propagation delays along the TBS-target-receiver and TBS-clutter-receiver paths, respectively.

The received signal of the satellite-assisted sensing branch associated with satellite $s$ and radar receiver $n$ is modelled as
\begin{equation}
\begin{aligned}
\mathbf{y}^{\mathrm{sat}}_{s,n}(t)
={}&
\sum_{i\in\mathcal{I}}
e^{-j\omega_c\tau^{\mathrm{sat}}_{s,i,n}}
\mathbf{H}^{\mathrm{S,tar}}_{s,i,n}
\mathbf{F}^{\mathrm{sat}}_{s}
\mathbf{s}^{\mathrm{sat}}_{s}
\left(t-\tau^{\mathrm{sat}}_{s,i,n}\right)
\\
&\hspace{-1.9em}+
\sum_{c\in\mathcal{C}}
e^{-j\omega_c\tau^{\mathrm{sat}}_{s,c,n}}
\mathbf{H}^{\mathrm{S,cl}}_{s,c,n}
\mathbf{F}^{\mathrm{sat}}_{s}
\mathbf{s}^{\mathrm{sat}}_{s}
\left(t-\tau^{\mathrm{sat}}_{s,c,n}\right)
+
\mathbf{n}^{\mathrm{sat}}_{s,n}(t).
\end{aligned}
\end{equation}
where $\mathbf{y}^{\mathrm{sat}}_{s,n}$ is the received satellite-branch signal vector, and $ \mathbf{n}^{\mathrm{sat}}_{s,n}$ is the AWGN vector with variance $\sigma_{\mathrm{sat}}^{2}$.  $\tau^{\mathrm{sat}}_{s,i,n}$ and $\tau^{\mathrm{sat}}_{s,c,n}$ denote the bistatic propagation delays along satellite-target-receiver and satellite-clutter-receiver paths, respectively.

The probing waveforms transmitted by different LEO satellites are assumed to be orthogonal. Therefore, inter-satellite sensing interference is not considered. The streams associated with different beamforming vectors are modelled as mutually uncorrelated. To improve sensing performance, each radar receiver employs the minimum variance distortionless response (MVDR) receive beamformer to preserve the desired target echo while suppressing interference and noise\cite{b7}. Then, the received signals for detecting the echo from the target $i$ are given by $\left(\mathbf{w}^{\mathrm{sat}}_{s,i,n}\right)^H \mathbf{y}^{\mathrm{sat}}_{s,n}$
and $\left(\mathbf{w}^{\mathrm{ter}}_{i,n}\right)^H
\mathbf{y}^{\mathrm{ter}}_{n}$ for the satellite-assisted and terrestrial sensing branches, respectively. For wireless sensing, the SCNR is adopted as the performance metric\cite{b16}. The satellite-assisted SCNR for target $i$ associated with satellite $s$ and radar receiver $n$ is given by
\vspace{-0.5em}
\begin{equation}
    \mathrm{SCNR}^{\mathrm{sat}}_{s,i,n}
    =
    \frac{
    \left|
    \mathbf{(w}^{\mathrm{sat}}_{s,i,n})^H
    \mathbf{H}^{\mathrm{S,tar}}_{s,i,n}
    \mathbf{f}^{\mathrm{sat}}_{s,i}
    \right|^{2}
    }{
    \sum_{k=1}^{3} I^{\mathrm{sat},(k)}_{s,i,n}
    +
    \sigma_{\mathrm{sat}}^{2}
    \left\|
    \mathbf{w}^{\mathrm{sat}}_{s,i,n}
    \right\|^{2}
    } ,
\end{equation}
where $I^{\mathrm{sat},(1)}_{s,i,n}
= \sum_{j\in\mathcal I,j\neq i} \left|
(\mathbf w^{\mathrm{sat}}_{s,i,n})^H
\mathbf H^{\mathrm{S,tar}}_{s,j,n}
\mathbf f^{\mathrm{sat}}_{s,i}
\right|^2$,
$I^{\mathrm{sat},(2)}_{s,i,n}
=
\sum_{c\in\mathcal C}
\left\|
(\mathbf w^{\mathrm{sat}}_{s,i,n})^H
\mathbf H^{\mathrm{S,cl}}_{s,c,n}
\mathbf F^{\mathrm{sat}}_{s}
\right\|^2$,
and
$I^{\mathrm{sat},(3)}_{s,i,n}
=
\sum_{\substack{j\in\mathcal I\\j\neq i}}
\sum_{q\in\mathcal I}
\left|
(\mathbf w^{\mathrm{sat}}_{s,i,n})^H
\mathbf H^{\mathrm{S,tar}}_{s,q,n}
\mathbf f^{\mathrm{sat}}_{s,j}
\right|^2$ represent the interference from unintended targets, clutter, and other satellite sensing beams, respectively.

The terrestrial branch SCNR for target $i$ at radar receiver $n$ is defined as
\vspace{-0.45em}
\begin{equation}
    \mathrm{SCNR}^{\mathrm{ter}}_{i,n}
    =
    \frac{
    \left|
    \mathbf{(w}^{\mathrm{ter}}_{i,n})^H
    \mathbf{G}^{\mathrm{T,tar}}_{i,n}
      \mathbf{f}^{\mathrm{ter}}_{i}
    \right|^{2}
    }{
    \sum_{k=1}^{3} I^{\mathrm{ter},(k)}_{i,n}
    +
    \sigma_{\mathrm{ter}}^{2}
    \left\|
    \mathbf{w}^{\mathrm{ter}}_{i,n}
    \right\|^{2}
    } ,
\end{equation}
where
$I^{\mathrm{ter},(1)}_{i,n}
=
\sum_{j\in\mathcal{I},j\neq i}
\left|
\left(\mathbf{w}^{\mathrm{ter}}_{i,n}\right)^H
\mathbf{G}^{\mathrm{T,tar}}_{j,n}
\mathbf{f}^{\mathrm{ter}}_{i}
\right|^2$,
$I^{\mathrm{ter},(2)}_{i,n}
=
\sum_{c\in\mathcal{C}}
\left\|
\left(\mathbf{w}^{\mathrm{ter}}_{i,n}\right)^H
\mathbf{G}^{\mathrm{T,cl}}_{c,n}
\mathbf{F}_{\mathrm{ter}}
\right\|^2$,
and
$I^{\mathrm{ter},(3)}_{i,n}
=
\sum_{a\in\mathcal{A}}
\sum_{q\in\mathcal{I}}
\left|
\left(\mathbf{w}^{\mathrm{ter}}_{i,n}\right)^H
\mathbf{G}^{\mathrm{T,tar}}_{q,n}
\mathbf{f}^{c}_{a}
\right|^2
+
\sum_{\substack{j\in\mathcal{I}\\ j\neq i}}
\sum_{q\in\mathcal{I}}
\left|
\left(\mathbf{w}^{\mathrm{ter}}_{i,n}\right)^H
\mathbf{G}^{\mathrm{T,tar}}_{q,n}
\mathbf{f}^{\mathrm{ter}}_{j}
\right|^2$ represent the interference from unintended targets, clutter, and other TBS beamforming vectors, respectively. In addition, $\mathbf{w}^{\mathrm{ter}}_{i,n}$ is the terrestrial MVDR receive beamforming vector.

\vspace{-0.3em}
\section{Sensing Performance Analysis}
This section develops a PPP-based stochastic geometry analysis and explicit scaling laws, introducing an Earth-curvature-aware time-varying Walker model and developing a theoretically characterised CRLB-oriented satellite-selection framework. The CRLB analysis considers a representative target at the centre of the sensing region. 
\vspace{-1em}
\subsection{PPP-Based Stochastic Satellite Model}
The TBS position is represented as $\mathbf p_{\mathrm T}=[0,0,h_{\mathrm T}]^{T}$ and the target position is denoted by $\mathbf p_i= [0,0,h_t]^{\mathrm{T}}.$ The terrestrial radar receiver $n$ is fixed at
\begin{equation}
    \mathbf r_n =
    [R_r\cos\theta_n,\ R_r\sin\theta_n,\ h_r]^{\mathrm{T}},
    \quad
    \theta_n=\frac{2\pi(n-1)}{N_{\mathrm{r}}}.
\end{equation}
where $R_r$ is the receiver sensing radius. 

The ground projections of the cooperative satellites are modelled as a homogeneous PPP \(\Phi_{\mathrm{s}}\) with density \(\lambda_s\) over the local cooperation region
\begin{equation}
    \mathcal B(D_{\max})
    =
    \left\{
    (\rho,\phi_s):0\leq \rho\leq D_{\max},\;0\leq \phi_s<2\pi
    \right\}.
\end{equation}
where \(\lambda_s\) denotes the average satellite density, and \(D_{\max}\) is the radius of the cooperation region. The number of satellites within the disk is Poisson distributed with mean $\lambda_s\pi D_{\max}^2$, and conditioned on this number, their ground-projection locations are i.i.d. uniform, with $f_{\rho}(\rho)=\frac{2\rho}{D_{\max}^2}$ and $f_{\phi_s}(\phi_s)=\frac{1}{2\pi}$.

A satellite with polar coordinates $(\rho,\phi_{s})$ can be expressed as $\mathbf p_s =[\rho\cos\phi_{s},\ \rho\sin\phi_{s},\ H]^{\mathrm{T}}$, where $\rho$ denotes the horizontal distance from the centre target to the ground projection of satellite $s$, $H$ is the satellite altitude, and $\phi_s$ denotes the corresponding azimuth angle. Therefore, the corresponding distance between the satellite and the target is given by $d_s(\rho)=\|\mathbf p_s-\mathbf p_i\| = \sqrt{\rho^2+h^2}$, where $h = H-h_t$ denotes the vertical distance between the satellite and the target. Moreover, the distance between the target and receiver is $d_r = \sqrt{R_r^2+(h_t-h_r)^2}$ and let $c_r=\frac{R_r}{d_r},b_r=\frac{h_t-h_r}{d_r}$. For the satellite-assisted bistatic observation generated by satellite $\mathbf p_s$ and receiver $\mathbf r_n$, the range gradient vector is
\vspace{-0.3em}
\begin{equation}
    \mathbf g_{s,n}
    =
    \frac{\mathbf p_i-\mathbf p_s}{\|\mathbf p_i-\mathbf p_s\|}
    +
    \frac{\mathbf p_i-\mathbf r_n}{\|\mathbf p_i-\mathbf r_n\|} =  \begin{bmatrix}
    -a_s\cos\phi_{s}-c_r\cos\theta_n\\
    -a_s\sin\phi_{s}-c_r\sin\theta_n\\
    b_r-b_s
    \end{bmatrix}.
\end{equation}
where $a_s(\rho)=\frac{\rho}{d_s(\rho)}$and $b_s(\rho)=\frac{h}{d_s(\rho)}$.

For analytical tractability, following the satellite-assisted SCNR formulation in Section~II-B, we adopt a simplified noise-limited and range-dependent SCNR model~\cite{b1} only for the closed-form analysis. The resulting effective satellite-assisted SCNR is denoted by $\gamma_{\mathrm{eff}}^{\mathrm{sat}}(\rho)$. Therefore, the corresponding range information weight is expressed as
\begin{equation}
    \omega_{\rm eff}^{\rm sat}(\rho)
    =
    \frac{8\pi^2B^2}{c_0^2}
    \gamma_{\rm eff}^{\rm sat}(\rho)
    =
    \frac{K_{\rm eff}^{\rm sat}}{\rho^2+h^2},
    \label{eq:effective_sat_omega}
\end{equation}
where $B$ is the RMS bandwidth, $c_0$ is the speed of light, and $
    K_{\rm eff}^{\rm sat}
    =
    \frac{8\pi^2B^2}{c_0^2}
    \Gamma_{\rm eff}^{\rm sat}
    \label{eq:effective_sat_K}
$ with $ \Gamma_{\rm eff}^{\rm sat}
    =
    \frac{P_s G_{\rm eff}\Xi_{\rm sat}}{\sigma_n^2}$, where $P_s$ is the satellite sensing transmit power per beam, $G_{\rm eff}$ denotes the effective post-beamforming array gain, $\sigma_n^2$ is the receiver noise power, and $\Xi_{\rm sat}$ denotes the path gain after extracting the dominant satellite-to-target range-dependent term $1/(\rho^2+h^2)$ with $\Xi_{\rm sat}=  \frac{\lambda_c^2\sigma_{\rm tar}^2}{(4\pi)^3d_r^2}$, where \(\sigma_{\rm tar}^2\) is the average target RCS.

The terrestrial FIM is obtained  from bistatic measurements between the fixed TBS and the radar receivers as
\begin{equation}
    d_{\rm T}
    =
    \|\mathbf p_i-\mathbf p_{\rm T}\|,
    \quad
    \mathbf g_{{\rm ter},n}
    =
    \frac{\mathbf p_i-\mathbf p_{\rm T}}{\|\mathbf p_i-\mathbf p_{\rm T}\|}
    +
    \frac{\mathbf p_i-\mathbf r_n}{\|\mathbf p_i-\mathbf r_n\|},
\end{equation}

The bistatic paths are assumed to be delay-resolvable, and the unknown complex reflection coefficients are eliminated. Therefore, the terrestrial FIM is
$\mathbf J_{\rm T}
=\sum_{n=1}^{N_{\rm r}}\omega_{{\rm eff},n}^{\rm ter}
\mathbf g_{{\rm ter},n}\mathbf g_{{\rm ter},n}^{\rm T}$, while the satellite-assisted FIM for one satellite is
\vspace{-0.8em}
\begin{equation}
\mathbf J_s(\rho,\phi_s)
=
\sum_{n=1}^{N_{\rm r}}
\omega_{\rm eff}^{\rm sat}(\rho)
\mathbf g_{s,n}\mathbf g_{s,n}^{\rm T}.
\label{eq:one_satellite_fim}
\end{equation}
where $ \omega_{{\rm eff},n}^{\rm ter}$ is the terrestrial range information weight.

To simplify the subsequent derivation, we denote $\omega_{\rm eff}^{\rm sat}(\rho)$ by $\omega(\rho)$. Averaging over the satellite azimuth and radial-location distributions gives
\begin{equation}
    \mathbb E_{\rho,\phi_s}[\mathbf J_s]
    =
    \operatorname{diag}(\mu_x,\mu_x,\mu_z),
\end{equation}
where
$\mu_x=\frac{N_{\mathrm r}}{2}
\mathbb E_{\rho}\!\left[
\omega(\rho)\left(a_s^2(\rho)+c_r^2\right)
\right]$
and
$\mu_z=N_{\mathrm r}
\mathbb E_{\rho}\!\left[
\omega(\rho)\left(b_r-b_s(\rho)\right)^2
\right]$.

By applying Campbell's theorem\cite{b22}, the mean satellite-assisted FIM becomes $\bar K \mathbb E_{\rho,\phi_s}[\mathbf J_s]$, where \(\bar K\) denotes the mean number of cooperative satellites in the cooperation region. Owing to the centred target and the symmetric circular receiver deployment, the cross terms in $\mathbf J_{\rm T}$ cancel, yielding a diagonal terrestrial FIM. We denote its diagonal entries by $J_{{\rm T},x}=[\mathbf J_{\rm T}]_{1,1}$, $J_{{\rm T},y}=[\mathbf J_{\rm T}]_{2,2}$, and $J_{{\rm T},z}=[\mathbf J_{\rm T}]_{3,3}$. Therefore, the mean hybrid FIM can be expressed as
\begin{equation}
    \bar{\mathbf J}
    =
    \mathbf J_T
    +
    \bar K\mathrm{diag}(\mu_x,\mu_x,\mu_z)
    =
    \mathrm{diag}(\Lambda_x,\Lambda_y,\Lambda_z),
\end{equation}
where $\Lambda_x=J_{T,x}+\bar K\mu_x$, $\Lambda_y=J_{T,y}+\bar K\mu_x$, and $\Lambda_z=J_{T,z}+\bar K\mu_z$. 

The average CRLB is defined as $\overline{\mathrm{CRLB}} =
    \mathbb E_{\Phi_s}
    \left[
    \operatorname{tr}
    \left(
    \mathbf J^{-1}(\Phi_s)
    \right)
    \right]$. Let $\Delta\mathbf J=\mathbf J(\Phi_s)-\bar{\mathbf J}$. A second-order Taylor expansion of the inverse FIM gives
\begin{equation}
    \mathbf J^{-1}(\Phi_s)
    \approx
    \bar{\mathbf J}^{-1}
    -
    \bar{\mathbf J}^{-1}
    \Delta\mathbf J
    \bar{\mathbf J}^{-1}
    +
    \bar{\mathbf J}^{-1}
    \Delta\mathbf J
    \bar{\mathbf J}^{-1}
    \Delta\mathbf J
    \bar{\mathbf J}^{-1},
\end{equation}

When the random FIM fluctuation $\Delta\mathbf J$ is sufficiently small relative to the mean FIM $\bar{\mathbf J}$, the third- and higher-order terms in the inverse-FIM expansion can be neglected. Therefore, since $\mathbb E_{\Phi_s}[\Delta\mathbf J]=\mathbf 0$, the expected CRLB is approximated as
\begin{equation}
\overline{\mathrm{CRLB}}
\approx {}
\operatorname{tr}
\left(
\bar{\mathbf J}^{-1}
\right)
+
\mathbb E_{\Phi_s}
\left[
\operatorname{tr}
\left(
\bar{\mathbf J}^{-1}
\Delta\mathbf J
\bar{\mathbf J}^{-1}
\Delta\mathbf J
\bar{\mathbf J}^{-1}
\right)
\right].
\end{equation}

Let $A_x=\frac{1}{\Lambda_x},A_y=\frac{1}{\Lambda_y}$ and $A_z=\frac{1}{\Lambda_z}$, and define $\zeta_s(\rho)= b_r-b_s(\rho)$. The first-order moments are then evaluated as
\begin{align}
    \mu_x
    &=
    \frac{N_{\mathrm{r}}K_{\rm eff}^{\rm sat}}{2}
    \left[
    (1+c_r^2)\mathcal I(-1)
    -
    h^2\mathcal I(-2)
    \right],\\
    \mu_z
    &=
    N_{\mathrm{r}}K_{\rm eff}^{\rm sat}
    \left[
    b_r^2\mathcal I(-1)
    -
    2b_rh\mathcal I(-3/2)
    +
    h^2\mathcal I(-2)
    \right],
\end{align}
where the function is defined as
\vspace{-0.5em}
\begin{equation}
    \mathcal I(\nu)
    =
    \frac{1}{D_{\max}^{2}}
    \int_{h^{2}}^{h^{2}+D_{\max}^{2}}
    t^{\nu}\,{\rm d}t.
    \label{eq:app_auxiliary_integral}
\end{equation}

To evaluate the second-order Taylor correction, we require the second-order moments of FIM fluctuation. 
\newtheorem{proposition}{Proposition}
\begin{proposition}
Each FIM entry satisfies $\mathbb E_{\Phi_s}
    \left[
    \left(\Delta J_{ij}\right)^2
    \right]
    =
    \bar K Q_{ij}$ with $Q_{ij}
    =
    \mathbb E_{\rho,\phi_s}
    \left[
    J_{s,ij}^2(\rho,\phi_s)
    \right]
$.  Therefore, the second-order correction can be expressed as
\vspace{-0.4em}
\begin{align}
    \mathcal T_{\rm taylor}
    &=
    A_x^3Q_{xx}
    +A_y^3Q_{yy}
    +A_z^3Q_{zz}
    +(A_x^2A_y+A_y^2A_x)Q_{xy}
    \nonumber\\
    &\hspace{-1.2em}
    +(A_x^2A_z+A_z^2A_x)Q_{xz}
    +(A_y^2A_z+A_z^2A_y)Q_{yz}.
    \label{eq:T_shot_closed}
\end{align}
\end{proposition}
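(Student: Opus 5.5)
The plan is to treat every entry of the random hybrid FIM as a Poisson shot-noise functional and to compute its variance with the second-order Campbell (Mecke) formula. The diagonal form of $\bar{\mathbf J}$ then turns the trace of the quadratic Taylor term into a weighted sum of squared entries.

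\emph{Step 1 (variance of each entry).} Write $\mathbf J(\Phi_s)=\mathbf J_{\rm T}+\sum_{\mathbf x\in\Phi_s}\mathbf J_s(\mathbf x)$ with $\mathbf x=(\rho,\phi_s)$. Since $\mathbf J_{\rm T}$ is deterministic, $\Delta J_{ij}=\sum_{\mathbf x\in\Phi_s}J_{s,ij}(\mathbf x)-\lambda_s\int_{\mathcal B(D_{\max})}J_{s,ij}(\mathbf x)\,{\rm d}\mathbf x$. For a homogeneous PPP, Campbell's theorem~\cite{b22} gives
\[
\mathbb E\Big[\Big(\sum_{\mathbf x}f(\mathbf x)\Big)^2\Big]=\lambda_s\int f^2\,{\rm d}\mathbf x+\Big(\lambda_s\int f\,{\rm d}\mathbf x\Big)^2 .
\]
This holds because the second factorial moment measure of a PPP is $\lambda_s^2\,{\rm d}\mathbf x\,{\rm d}\mathbf y$. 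Subtracting the squared mean gives $\mathbb E[(\Delta J_{ij})^2]=\lambda_s\int J_{s,ij}^2\,{\rm d}\mathbf x$. Normalising by the uniform densities $f_\rho(\rho)=2\rho/D_{\max}^2$ and $f_{\phi_s}=1/(2\pi)$, and using $\bar K=\lambda_s\pi D_{\max}^2$, this becomes $\bar K\,\mathbb E_{\rho,\phi_s}[J_{s,ij}^2]=\bar K Q_{ij}$. The same argument gives $\mathbb E[\Delta J_{ij}\Delta J_{kl}]=\bar K\,\mathbb E[J_{s,ij}J_{s,kl}]$, but only the case $(k,l)=(j,i)$ will be needed.

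\emph{Step 2 (trace expansion).} With $\bar{\mathbf J}^{-1}=\mathrm{diag}(A_x,A_y,A_z)$, which is diagonal by the symmetry argument already given, expand
\[
\operatorname{tr}\big(\bar{\mathbf J}^{-1}\Delta\mathbf J\bar{\mathbf J}^{-1}\Delta\mathbf J\bar{\mathbf J}^{-1}\big)=\sum_{i,j}A_iA_jA_i\,\Delta J_{ij}\Delta J_{ji}=\sum_{i,j}A_i^2A_j(\Delta J_{ij})^2 ,
\]
where the last equality uses the symmetry of $\Delta\mathbf J$. The diagonal terms $i=j$ contribute $A_i^3(\Delta J_{ii})^2$. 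Each unordered off-diagonal pair appears twice, once as $(i,j)$ and once as $(j,i)$, and so contributes $(A_i^2A_j+A_j^2A_i)(\Delta J_{ij})^2$. Taking $\mathbb E_{\Phi_s}$ and inserting Step 1 yields $\bar K\,\mathcal T_{\rm taylor}$ with $\mathcal T_{\rm taylor}$ exactly as in \eqref{eq:T_shot_closed}. I read the proposition's "second-order correction" as $\mathcal T_{\rm taylor}$ carried together with the prefactor $\bar K$ from the first claim, so that $\overline{\mathrm{CRLB}}\approx\operatorname{tr}(\bar{\mathbf J}^{-1})+\bar K\,\mathcal T_{\rm taylor}$. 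Alternatively, $\bar K$ may be absorbed into the $Q_{ij}$ if the authors define them per process rather than per satellite.

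\emph{Main obstacle.} The algebra is routine, so the only delicate point is justifying the Campbell second-moment step. It requires $J_{s,ij}\in L^1\cap L^2$ over the disk, which holds because $\omega(\rho)\le K^{\rm sat}_{\rm eff}/h^2$ and $\|\mathbf g_{s,n}\|\le 2$ make every $J_{s,ij}$ bounded on a bounded region. A second point to state carefully is that the expectation of the quadratic Taylor term is taken with respect to the Poisson randomness in the number of satellites as well, not conditionally on their count. Conditioning would produce a binomial variance $\bar K(Q_{ij}-\bar J_{s,ij}^2)$ instead of $\bar K Q_{ij}$, so I would make explicit that the Poisson count is what removes the squared-mean term.
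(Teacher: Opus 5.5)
Your proposal is correct and follows the paper's route: the second-order Campbell theorem gives $\mathbb E_{\Phi_s}[(\Delta J_{ij})^2]=\bar K Q_{ij}$, and the diagonal $\bar{\mathbf J}^{-1}=\operatorname{diag}(A_x,A_y,A_z)$ reduces the trace to $\sum_{i,j}A_i^2A_j(\Delta J_{ij})^2$, which groups into $\mathcal T_{\rm taylor}$ exactly as you describe. The paper's appendix also evaluates each $Q_{ij}$ explicitly through azimuthal averaging and the radial moments $\xi_{\omega},\xi_{\omega a},\ldots$; this is not required by the statement. Your explicit variance computation, including the point that the Poisson count is what removes the squared-mean term, spells out a step the paper only cites.
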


\begin{IEEEproof}
Please refer to Appendix A.
\end{IEEEproof}

Therefore, the PPP-averaged closed-form CRLB approximation can be expressed as
\begin{equation}
    \overline{\mathrm{CRLB}}
    \approx
    A_x+A_y+A_z
    +
    \bar K\mathcal T_{\rm taylor}.
    \label{eq:closed_form_crlb_final}
\end{equation}

Based on \eqref{eq:closed_form_crlb_final}, we further characterise the scaling law of the hybrid sensing accuracy with respect to the mean number of cooperative satellites. 
\newtheorem{theorem}{Theorem}
\begin{theorem}
For fixed $D_{\max}$ and $H$, and with the per-satellite transmit power and sensing bandwidth fixed, the PPP-averaged closed-form CRLB approximation follows
\begin{equation}
    \mathrm{CRLB}
     \sim    
    \frac{1}{\bar K}
    \left(
        \frac{2}{\mu_x}
        +
        \frac{1}{\mu_z}
    \right)
    +
    \mathcal O(\bar K^{-2}).
    \label{eq:trace_crlb_k_scaling}
\end{equation}

Therefore, the root-CRLB approximation decreases proportionally to  $1/\sqrt{ {\bar{K}}}$. 
\end{theorem}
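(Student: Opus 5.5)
The argument works directly on the closed-form approximation \eqref{eq:closed_form_crlb_final}, $\overline{\mathrm{CRLB}}\approx A_x+A_y+A_z+\bar K\mathcal T_{\rm taylor}$, and treats $\bar K$ as the only growing parameter. The first step is to check that, with $D_{\max}$, $H$, the per-satellite power and the bandwidth all fixed, the quantities $\mu_x$, $\mu_z$ and $Q_{ij}$ do not depend on $\bar K$. This holds because $\mu_x,\mu_z$ are given explicitly through $\mathcal I(\nu)$ and $K_{\rm eff}^{\rm sat}$, and $Q_{ij}=\mathbb E_{\rho,\phi_s}[J_{s,ij}^2]$ is a single-satellite average over the fixed distributions $f_\rho$ and $f_{\phi_s}$. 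Similarly, $J_{{\rm T},x}$, $J_{{\rm T},y}$ and $J_{{\rm T},z}$ are constants.

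Next, expand each inverse eigenvalue. For $\Lambda_x=J_{{\rm T},x}+\bar K\mu_x$ with $\mu_x>0$,
\begin{equation*}
A_x=\frac{1}{\bar K\mu_x}\Bigl(1+\frac{J_{{\rm T},x}}{\bar K\mu_x}\Bigr)^{-1}=\frac{1}{\bar K\mu_x}-\frac{J_{{\rm T},x}}{\bar K^2\mu_x^2}+\mathcal O(\bar K^{-3}).
\end{equation*}
The same expansion applies to $A_y$ with $J_{{\rm T},y}$, and to $A_z$ with $\mu_z$ and $J_{{\rm T},z}$. Summing gives $A_x+A_y+A_z=\bar K^{-1}(2/\mu_x+1/\mu_z)+\mathcal O(\bar K^{-2})$. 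This is also the point to verify positivity. We have $\mu_x>0$ trivially, since $a_s^2+c_r^2>0$. For $\mu_z$ we need $\mathbb E_\rho[\omega(\rho)(b_r-b_s(\rho))^2]>0$. This holds because $b_s(\rho)$ is strictly decreasing in $\rho$ on a set of positive measure, so $\zeta_s(\rho)$ cannot vanish almost everywhere.

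Then bound the Taylor correction. Every term in $\mathcal T_{\rm taylor}$ is a product of three of the $A$'s times a fixed $Q_{ij}$, so each term is $\mathcal O(\bar K^{-3})$. Hence $\bar K\mathcal T_{\rm taylor}=\mathcal O(\bar K^{-2})$, with leading coefficient
\[
\frac{Q_{xx}}{\mu_x^3}+\frac{Q_{yy}}{\mu_x^3}+\frac{Q_{zz}}{\mu_z^3}+\frac{2Q_{xy}}{\mu_x^3}+\Bigl(\frac{1}{\mu_x^2\mu_z}+\frac{1}{\mu_x\mu_z^2}\Bigr)(Q_{xz}+Q_{yz}).
\]
The $Q_{ij}$ are finite because $\omega(\rho)\le K_{\rm eff}^{\rm sat}/h^2$ and $\|\mathbf g_{s,n}\|\le 2$ on the bounded disk. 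Combining this with the previous step yields \eqref{eq:trace_crlb_k_scaling}. Taking square roots then gives $\sqrt{\mathrm{CRLB}}=\bar K^{-1/2}\sqrt{2/\mu_x+1/\mu_z}\,(1+\mathcal O(\bar K^{-1}))$, which is the $1/\sqrt{\bar K}$ decay.

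The main obstacle is conceptual rather than computational. The statement concerns the approximation \eqref{eq:closed_form_crlb_final}, so the argument should be phrased for that expression. It should also be noted that the approximation itself improves as $\bar K$ grows: the relative fluctuation $\|\bar{\mathbf J}^{-1}\Delta\mathbf J\|$ is of order $\sqrt{\bar K Q}/(\bar K\mu)=\mathcal O(\bar K^{-1/2})$, which justifies neglecting the higher-order Taylor terms. I would add this remark, using the Proposition's variance formula $\mathbb E[(\Delta J_{ij})^2]=\bar KQ_{ij}$, and otherwise keep the proof to the expansions above.
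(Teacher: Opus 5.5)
Your proof is correct and follows the same route as the paper's Appendix~B. In both, $\mu_x,\mu_z$ and $Q_{ij}$ are independent of $\bar K$, the entries $A_x,A_y,A_z$ supply the $\bar K^{-1}(2/\mu_x+1/\mu_z)$ leading term, and every term of $\mathcal T_{\rm taylor}$ is cubic in the $A$'s, so $\bar K\mathcal T_{\rm taylor}=\mathcal O(\bar K^{-2})$; your explicit expansion of $A_x$ is slightly more careful, since it shows the terrestrial entries $J_{{\rm T},x},J_{{\rm T},y},J_{{\rm T},z}$ also contribute at order $\bar K^{-2}$, which the paper absorbs without comment into the ``$\sim$''.
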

\begin{IEEEproof}
Please refer to Appendix B.
\end{IEEEproof}

We next characterise the scaling law with respect to the satellite sensing radius $D_{\max}$ while keeping the satellite density $\lambda_s$ fixed. The satellite-assisted FIM contribution along the horizontal dimension is
\begin{equation}
\begin{aligned}
    \bar K(D_{\max})\mu_x(D_{\max})
    &=
    \frac{\lambda_s\pi N_{\mathrm{r}}K_{\mathrm{eff}}^{\mathrm{sat}}}{2}
    \Bigg[
    (1+c_r^2)
    \ln\left(
        1+\frac{D_{\max}^2}{h^2}
    \right) \\
    &\qquad
    -1+
    \frac{h^2}{h^2+D_{\max}^2}
    \Bigg].
\end{aligned}
\end{equation}
\newtheorem{lemma}{Lemma}
\begin{lemma}
For fixed $\lambda_s$, $H$, per-satellite transmit power and sensing bandwidth, the satellite-assisted FIM contribution along each horizontal dimension satisfies
\vspace{-0.05em}
\begin{equation}
\begin{aligned}
&\bar K(D_{\max})\mu_x(D_{\max}) \\
&\quad =
\begin{cases}
\displaystyle
\mathcal O\left(
\frac{D_{\max}^2}{h^2}
\right),
& D_{\max}\ll h,
\\[2mm]
\displaystyle
\mathcal O\left[
\ln\left(
1+\frac{D_{\max}^2}{h^2}
\right)
\right],
& D_{\max}\gg h.
\end{cases}
\end{aligned}
\end{equation}
\end{lemma}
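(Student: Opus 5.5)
The plan is to work directly from the closed-form expression for $\bar K(D_{\max})\mu_x(D_{\max})$ displayed just before the lemma, and first check that it is consistent with the earlier moment formula for $\mu_x$. With $\bar K=\lambda_s\pi D_{\max}^2$, the factor $1/D_{\max}^2$ in $\mathcal I(\nu)$ cancels. The two auxiliary integrals then evaluate as
\begin{equation*}
D_{\max}^2\mathcal I(-1)=\ln\!\left(1+\frac{D_{\max}^2}{h^2}\right),\qquad D_{\max}^2h^2\mathcal I(-2)=1-\frac{h^2}{h^2+D_{\max}^2}.
\end{equation*}
This reproduces the bracket $(1+c_r^2)\ln(1+x)-1+\frac{1}{1+x}$, where we set $x=D_{\max}^2/h^2$. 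Since $\lambda_s$, $H$ (hence $h$), $N_{\mathrm r}$, $K_{\rm eff}^{\rm sat}$ and $c_r$ are fixed, everything reduces to the asymptotics of the scalar function
\begin{equation*}
f(x)=(1+c_r^2)\ln(1+x)-\frac{x}{1+x},
\end{equation*}
using the rewrite $-1+\frac{1}{1+x}=-\frac{x}{1+x}$.

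For the regime $D_{\max}\ll h$, i.e. $x\to0$, I will Taylor expand. We have $\ln(1+x)=x-x^2/2+\mathcal O(x^3)$ and $\frac{x}{1+x}=x-x^2+\mathcal O(x^3)$. Hence
\begin{equation*}
f(x)=c_r^2x+\left(\tfrac12-\tfrac{c_r^2}{2}\right)x^2+\mathcal O(x^3).
\end{equation*}
This is $\mathcal O(x)=\mathcal O(D_{\max}^2/h^2)$. For the upper bound one can also argue directly: $\ln(1+x)\le x$ and $x/(1+x)\ge 0$ give $0\le f(x)\le(1+c_r^2)x$. Physically this is the linear growth of $\bar K$ with area while $\omega(\rho)\approx K_{\rm eff}^{\rm sat}/h^2$ is essentially constant over the disk.

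For the regime $D_{\max}\gg h$, i.e. $x\to\infty$, the subtracted term satisfies $0\le \frac{x}{1+x}<1$. It is therefore bounded, while $\ln(1+x)\to\infty$. Hence $f(x)=(1+c_r^2)\ln(1+x)+\mathcal O(1)$, which gives $f(x)=\mathcal O(\ln(1+x))$ with leading constant $1+c_r^2$. Intuitively, the Campbell integral $\int \omega(\rho)\,a_s^2(\rho)\,\rho\,{\rm d}\rho\sim\int \rho^{-1}{\rm d}\rho$ diverges only logarithmically, because the $1/(\rho^2+h^2)$ weight decay exactly offsets the linear growth of the annulus area.

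There is one point that needs care rather than any real difficulty. In the small-$x$ regime the leading coefficient is $c_r^2$, not $1+c_r^2$, because the $\ln$ and $x/(1+x)$ terms cancel at first order. I will note that the bound still holds when $c_r=0$: the expression then becomes $\mathcal O(x^2)$, which is still $\mathcal O(x)$. Multiplying back by the constant $\lambda_s\pi N_{\mathrm r}K_{\rm eff}^{\rm sat}/2$ yields both cases of the lemma.
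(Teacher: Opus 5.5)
Your proposal is correct and takes essentially the same route as the paper's proof. The paper also rewrites the bracket with $\delta_D=D_{\max}^2/h^2$ as $(1+c_r^2)\ln(1+\delta_D)-\delta_D/(1+\delta_D)$. For $D_{\max}\ll h$ it Taylor-expands to $c_r^2\delta_D+\frac{1-c_r^2}{2}\delta_D^2+\mathcal O(\delta_D^3)$, and for $D_{\max}\gg h$ it notes that the subtracted term is bounded, so the logarithm dominates. Your elementary bound $f(x)\le(1+c_r^2)x$ and your remark on the $c_r=0$ case are harmless refinements that the paper does not state.
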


\begin{IEEEproof}
Please refer to Appendix~C.
\end{IEEEproof}

The horizontal FIM therefore transitions from a quadratic-growth regime to a logarithmic-growth regime when $D_{\max}$ is of the same order as $h$. Moreover, the vertical satellite-assisted FIM contribution is given by
\begin{equation}
\begin{aligned}
\bar K(D_{\max})\mu_z(D_{\max})
&=
\lambda_s\pi N_{\mathrm r}K_{\mathrm{eff}}^{\mathrm{sat}}
\Bigg[
b_r^2\ln\left(1+\frac{D_{\max}^2}{h^2}\right) \\
&\hspace{-5em}
-4b_r\left(
1-\frac{1}{\sqrt{1+\frac{D_{\max}^2}{h^2}}}
\right)
+\frac{\frac{D_{\max}^2}{h^2}}
{1+\frac{D_{\max}^2}{h^2}}
\Bigg],
\end{aligned}
\end{equation}

\begin{lemma}
For fixed $\lambda_s$, $H$, and the per-satellite sensing parameters, the vertical satellite-assisted FIM contribution in the regime $D_{\max}\gg h$ satisfies
\begin{equation}
\bar K(D_{\max})\mu_z(D_{\max})
=
\begin{cases}
\displaystyle
\mathcal O\!\left[
\ln\!\left(
1+\frac{D_{\max}^2}{h^2}
\right)
\right],
& b_r\neq0,
\\[1mm]
\displaystyle
\mathcal O(1),
& b_r=0.
\end{cases}
\end{equation}
\end{lemma}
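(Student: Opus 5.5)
The plan is to work directly from the closed-form expression for $\bar K(D_{\max})\mu_z(D_{\max})$ stated just before the lemma. Throughout, write $x = D_{\max}^2/h^2$ and study its three bracketed terms separately as $x\to\infty$. This expression itself comes from $\bar K=\lambda_s\pi D_{\max}^2$ together with the formula for $\mu_z$ in terms of $\mathcal I(-1)$, $\mathcal I(-3/2)$ and $\mathcal I(-2)$. The prefactor $D_{\max}^{-2}$ in $\mathcal I(\nu)$ cancels against $D_{\max}^2$ in $\bar K$, giving
\begin{equation*}
\bar K\mu_z=\lambda_s\pi N_{\mathrm r}K_{\mathrm{eff}}^{\mathrm{sat}}\Big[b_r^2\ln(1+x)-2b_rh\big(-2t^{-1/2}\big)\big|_{h^2}^{h^2(1+x)}+h^2\big(-t^{-1}\big)\big|_{h^2}^{h^2(1+x)}\Big].
\end{equation*}
I would briefly re-verify that this gives the displayed form, in particular that the middle term is $-4b_r(1-(1+x)^{-1/2})$ and the last is $x/(1+x)$. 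The prefactor $\lambda_s\pi N_{\mathrm r}K_{\mathrm{eff}}^{\mathrm{sat}}$ is constant under the stated assumptions, since $\lambda_s$, $H$ (hence $h$) and the per-satellite parameters are fixed. Also $b_r$ depends only on the fixed receiver geometry, not on $D_{\max}$.

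Next comes the asymptotic bookkeeping for $x\gg1$.
\begin{itemize}
\item The second term satisfies $|4b_r(1-(1+x)^{-1/2})|\le 4|b_r|$ and tends to $-4b_r$.
\item The third term $x/(1+x)$ lies in $[0,1)$ and tends to $1$.
\end{itemize}
Hence the sum of these two terms is $\mathcal O(1)$, and more precisely equals $1-4b_r+o(1)$.

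For the case split:
\begin{itemize}
\item If $b_r\neq0$, the first term $b_r^2\ln(1+x)$ diverges. Dividing the bracket by $\ln(1+x)$ gives the limit $b_r^2>0$, so $\bar K\mu_z=\Theta(\ln(1+D_{\max}^2/h^2))$, which is in particular $\mathcal O$ of it.
\item If $b_r=0$, the logarithmic and middle terms vanish identically. The bracket reduces to $x/(1+x)\to1$, so $\bar K\mu_z$ is bounded and saturates at $\lambda_s\pi N_{\mathrm r}K_{\mathrm{eff}}^{\mathrm{sat}}$, i.e. it is $\mathcal O(1)$.
\end{itemize}
I would also note the interpretation: $b_r=0$ means the receivers are coplanar with the target, so the only vertical information comes from $b_s(\rho)^2$, which decays like $h^2/\rho^2$ and makes the weighted integral converge.

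The main obstacle is not the limits themselves but making sure the closed form is correct, since $\mathcal I(-3/2)$ enters with the factor $-2b_rh$. The cancellation of $h$ there requires $h\,\mathcal I(-3/2)D_{\max}^2 = 2(1-(1+x)^{-1/2})$, which I would verify explicitly. A second point needs care when $b_r\neq0$: the constant terms could make the bracket negative for moderate $x$, but not asymptotically. For large $x$ the logarithm dominates, so the leading growth is strictly logarithmic with coefficient $b_r^2$. Positivity overall is guaranteed in any case, because $\mu_z$ is an expectation of the nonnegative quantity $\omega(b_r-b_s)^2$.
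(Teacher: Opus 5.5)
Your proposal is correct and follows essentially the same route as the paper's proof. Both rewrite the closed form in terms of $\delta_D=D_{\max}^2/h^2$, note that $-4b_r\bigl(1-(1+\delta_D)^{-1/2}\bigr)+\delta_D/(1+\delta_D)=1-4b_r+o(1)$ stays bounded, and split on whether the $b_r^2\ln(1+\delta_D)$ term survives. (Your worry that the bracket could be negative for moderate $D_{\max}$ is unfounded: it equals $\int_{h^2}^{h^2(1+\delta_D)} t^{-1}\bigl(b_r-h\,t^{-1/2}\bigr)^2\,\mathrm dt\ge 0$ for every $D_{\max}$.)
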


\begin{IEEEproof}
Please refer to Appendix~D.
\end{IEEEproof}

When $b_r=0$, the vertical satellite-assisted FIM converges to a finite value, resulting in a non-zero localisation floor. Considering Eqs.~(27) and (29) jointly, the sustained growth of the vertical FIM depends critically on the geometry of the terrestrial network. Combining the horizontal and vertical FIM contributions, the overall satellite-assisted FIM exhibits logarithmic growth with $D_{\max}\rightarrow\infty$ for $b_r\neq 0$ as $\mathcal O\!\left[\ln\!\left(1+D_{\max}^2/h^2\right)\right]$. Furthermore, as established in (25), increasing the mean number of cooperative satellites while keeping $D_{\max}$ fixed yields a root-CRLB scaling of $\mathcal O(\bar K^{-1/2})$. The large-radius analysis shows that increasing $\bar K$ by enlarging $D_{\max}$ while keeping $\lambda_s$ fixed yields a slower scaling of $\mathcal O((\ln\bar K)^{-1/2})$ for $b_r\neq0$. Hence, increasing the density of nearby cooperative satellites provides a faster asymptotic localisation gain than extending the cooperation region to include increasingly distant satellites. 

\vspace{-1em}
\subsection{Earth-Curvature-Aware Walker Constellation Model}
\begin{figure}[t]
\centering
\includegraphics[width=0.65\linewidth]{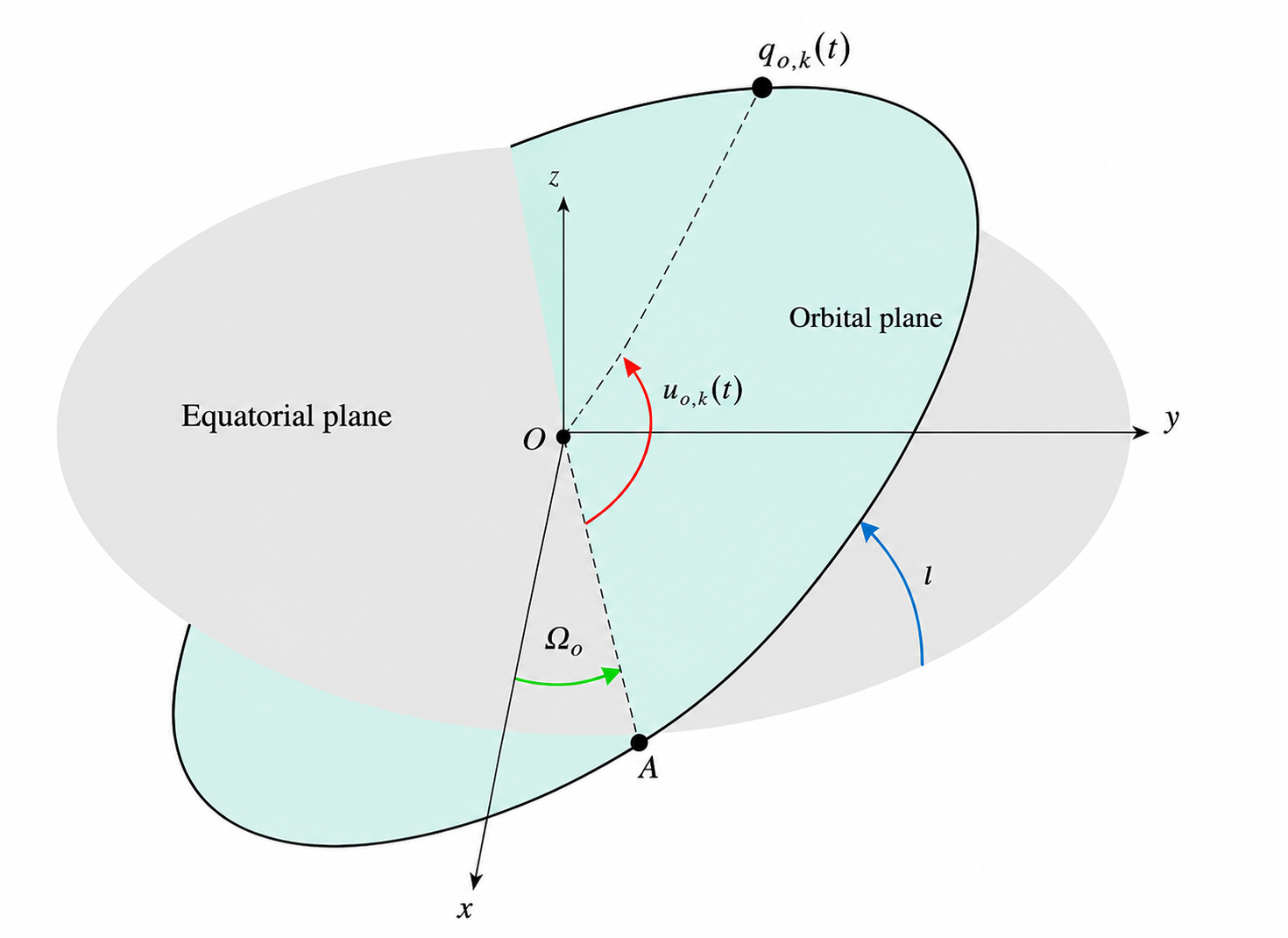} 
\caption{Illustration of longitude and inclination.}
\vspace{-1em}
\end{figure}
To capture practical orbit-constrained LEO geometry, we further incorporate Earth curvature and satellite orbital motion, thereby accounting for the time-varying slant range, line-of-sight (LoS) direction, elevation angle, and satellite visibility. Let $R_E$ denote the Earth radius.  We assume that all satellites follow circular orbits at altitude $H$ and move along their orbits in the same direction with constant angular velocity. Over the considered short observation interval, Earth rotation is neglected, while satellite orbital motion is explicitly retained and the orbital planes are treated as fixed. The Earth-centred radius of the satellite orbit is $R_s = R_E + H$. To describe the Walker constellation geometry, we consider $N_{\mathrm{orb}}$ circular orbital planes, where each orbital plane contains $N_{\mathrm{sat}}$ satellites. The intersection point between an orbital plane and the equatorial plane at which the satellites cross the equator from south to north is referred to as the ascending node of that orbit. The longitude of the $o$-th orbital plane, denoted by $\Omega_o$, is defined as the angle from the reference $x$-axis to the ascending node, measured on the equatorial plane. The common orbital inclination is denoted by $\iota$, which is defined as the angle between the orbital plane and the equatorial plane at the ascending node. Under the fixed-orbit assumption, $\iota$ remains constant over time. For the $k$-th satellite on the $o$-th orbital plane, its orbital phase at time $t$ is denoted by $u_{o,k}(t)$, which is defined as the angle between the satellite and the ascending point at that time, measured along the orbital plane of satellites. The geometric definitions of $\Omega_o$, $\iota$, and $u_{o,k}(t)$ are illustrated in Fig.~2. For circular orbits, the orbital angular velocity is represented by
\vspace{-0.1em}
\begin{equation}
    \omega(H)
    =
    \sqrt{\frac{\mu_E}{(R_E+H)^3}},
    \label{eq:orbital_angular_velocity}
\end{equation}
where $\mu_E$ is the Earth's gravitational parameter.

For a Walker constellation, the orbital longitudes are uniformly spaced as
\begin{equation}
    \Omega_o
    =
    \left(
    \bar{\Omega}
    + \frac{2\pi(o-1)}{N_{\mathrm{orb}}}
    \right)\mathrm{mod}\ 2\pi
    \qquad
    o=1,\ldots,N_{\mathrm{orb}},
    \label{eq:walker_orbit_longitude}
\end{equation}
where $\bar{\Omega}$ is the global longitude offset and $\bar{\Omega}$  is uniformly distributed over $(0,2\pi/N_{\mathrm{orb}})$.

The initial orbital phase of the $k$-th satellite on the $o$-th orbital plane is given by
\vspace{-0.1em}
\begin{equation}
    u_{o,k}(0)
    =
   \bar u
    +
    \frac{2\pi(k-1)}{N_{\mathrm{sat}}}
    +
    \frac{2\pi F(o-1)}{N_{\mathrm{orb}}N_{\mathrm{sat}}},
    \label{eq:walker_initial_phase}
\end{equation}
where $\bar u$ is the global phase offset and $\bar u$  is uniformly distributed over  $(0,2\pi/N_{\mathrm{sat}})$, which is independent of $\bar{\Omega}$. $F\in\{0,1,\ldots,N_{\rm orb}-1\}$ is the Walker phasing factor governing inter-plane satellite phasing~\cite{b33}. 

To capture the time-varying LEO geometry, the orbital phase is modelled as $u_{o,k}(t)=\bigl(u_{o,k}(0)+\omega(H)t\bigr)\bmod 2\pi$. The Earth-centred position of the $k$-th satellite on the $o$-th orbital plane is then expressed as\cite{b18}
\begin{equation}
    \mathbf q_{o,k}(t)
    =
    \mathbf R_z(\Omega_o)\mathbf R_x(\iota)
    \begin{bmatrix}
        R_s\cos u_{o,k}(t)\\
        R_s\sin u_{o,k}(t)\\
        0
    \end{bmatrix},
    \label{eq:walker_satellite_position}
\end{equation}
where $\mathbf R_x(\iota)$ accounts for the orbital inclination, while $\mathbf R_z(\Omega_o)$ accounts for the orbital longitude.

Let $\mathbf q_0$ denote the Earth-centred position of the representative sensing target. When the sensing target has altitude $h_t$,  its Earth-centred radius is $R_0=R_E+h_t$. The Earth-centred central angle between satellite $(o,k)$ and the sensing target is obtained from their 3-D Earth-centred positions as
\begin{equation}
    \vartheta_{o,k}(t_q)
    =
    \arccos
    \left(
    \frac{
    \mathbf q_{o,k}^{T}(t_q)\mathbf q_0
    }
    {R_sR_0}
    \right),
    \label{eq:walker_central_angle}
\end{equation}

Accordingly, the satellite-to-target slant range is
\begin{equation}
    d_{o,k}(H,t_q)
    =
    \sqrt{
    R_s^2+R_0^2
    -
    2R_sR_0\cos\vartheta_{o,k}(t_q)
    } .
    \label{eq:walker_slant_range}
\end{equation}

The corresponding elevation angle satisfies
\begin{equation}
    \sin \epsilon_{o,k}(H,t_q)
    =
    \frac{
    R_s\cos\vartheta_{o,k}(t_q)-R_0
    }
    {d_{o,k}(H,t_q)} .
    \label{eq:walker_elevation_angle}
\end{equation}

The visibility condition $\epsilon_{o,k}(H,t_q)\geq \epsilon_{\min}$ is equivalently expressed as $\vartheta_{o,k}(t_q)\leq\vartheta_{\max}(H,\epsilon_{\min})$. Therefore
\begin{equation}
    \vartheta_{\max}(H,\epsilon_{\min})
    =
    \arccos\left(
    \frac{R_0}{R_s}\cos\epsilon_{\min}
    \right)
    -
    \epsilon_{\min}.
    \label{eq:walker_theta_max}
\end{equation}

Moreover, the visibility indicator is defined as
\begin{equation}
    \chi_{o,k}(t_q)
    =
    \begin{cases}
    1, & \vartheta_{o,k}(t_q)\leq
    \vartheta_{\max}(H,\epsilon_{\min}),\\
    0, & \vartheta_{o,k}(t_q)>
    \vartheta_{\max}(H,\epsilon_{\min}).
    \end{cases}
    \label{eq:walker_visibility_indicator}
\end{equation}

The satellite-assisted bistatic range associated with satellite $(o,k)$ and radar receiver $n$ at time $t_q$ is
\begin{equation}
    \rho^{\mathrm{sat}}_{o,k,n,i}(t_q)
    =
    d_{o,k}(H,t_q)
    +
    \left\|
    \mathbf p_i-\mathbf r_n
    \right\|,
    \label{eq:walker_bistatic_range}
\end{equation}

The range gradient vector can be expressed as
\begin{equation}
    \mathbf g^{\mathrm{sat}}_{o,k,n,i}(t_q)
    =
    \mathbf a_{o,k}(t_q)
    +
    \frac{
    \mathbf p_i-\mathbf r_n
    }
    {
    \left\|
    \mathbf p_i-\mathbf r_n
    \right\|
    },
    \label{eq:walker_range_gradient}
\end{equation}
where \begin{equation}
    \mathbf a_{o,k}(t_q)
    =
    \frac{
    \mathbf U_0^T
    \left(
    \mathbf q_0-\mathbf  q_{o,k}(t_q)
    \right)
    }
    {
    d_{o,k}(H,t_q)
    },
    \label{eq:walker_satellite_los_local}
\end{equation}
is the satellite-target LoS unit vector, and $\mathbf U_0^T$ transforms an Earth-centred vector into the local sensing frame.

 Under the effective SCNR model, the simplified effective satellite-assisted weight is $
    \omega^{\mathrm{sat}}_{o,k,i}(t_q)
    =
    \frac{
    K_{\mathrm{eff}}^{\mathrm{sat}}
    }
    {
    d_{o,k}^2(H,t_q)
    }.$ The satellite-assisted FIM under the Walker constellation is obtained by summing the contributions from all visible satellite-time observations and all radar receivers, which is
\begin{equation}
\begin{aligned}
    \mathbf J^{\mathrm W}_{\mathrm{sat},i}
    &=
    \sum_{q=1}^{Q}
    \sum_{o=1}^{N_{\mathrm{orb}}}
    \sum_{k=1}^{N_{\mathrm{sat}}}
    \chi_{o,k}(t_q)
    \omega^{\mathrm{sat}}_{o,k,i}(t_q)  \\
    &\quad \times
    \sum_{n=1}^{N_r}
    \mathbf g^{\mathrm{sat}}_{o,k,n,i}(t_q)
    \left(
    \mathbf g^{\mathrm{sat}}_{o,k,n,i}(t_q)
    \right)^T .
\end{aligned}
\label{eq:walker_satellite_fim}
\end{equation}

Let $M_{\mathrm{vis}}$ denote the mean number of visible satellite-time observations, averaged over $\bar{\Omega}$ and $\bar u$, where each visible satellite-time pair is counted as one observation. To obtain closed-form moments, the visible satellite-time observations are further approximated as being uniformly distributed over the visible spherical cap above the representative target. For the second-order fluctuation analysis, cross-observation covariances induced by the common orbital structure and Walker phasing are neglected.

\begin{proposition}
Let $  x=\cos\vartheta$ and $x_0=\cos\vartheta_{\max}(H,\epsilon_{\min})$, where $\vartheta$ is the Earth-centred central angle.  Under the uniform visible-cap approximation, $x$ is uniformly distributed over $[x_0,1]$. The corresponding SCNR-weighted moments are
\begin{equation}
    \mu_m
    =
    \frac{K_{\mathrm{eff}}^{\mathrm{sat}}}{1-x_0}
    \int_{x_0}^{1}
    \frac{
    (R_sx-R_0)^m
    }
    {
    D(x)^{m/2+1}
    }
    dx,
    \quad
    m=0,1,2.
    \label{eq:first_order_visible_moments_integral}
\end{equation}
where $D(x)=R_s^2+R_0^2-2R_sR_0x$, $    d^2(x)=D(x)$ and $s(x)= \frac{R_sx-R_0}{\sqrt{D(x)}}$.
    
The averaged satellite-assisted FIM under the isotropic visible-observation approximation can be expressed as
\begin{equation}
    \overline{\mathbf J}^{\mathrm W}_{\mathrm{sat}}
    \approx
    \operatorname{diag}
    \left(
    \bar{\lambda}^{\mathrm{sat}}_{x},
    \bar{\lambda}^{\mathrm{sat}}_{x},
    \bar{\lambda}^{\mathrm{sat}}_{z}
    \right),
    \label{eq:visible_cap_sat_fim_diag}
\end{equation}
where $\bar{\lambda}^{\mathrm{sat}}_{x} = \frac{N_rM_{\mathrm{vis}}}{2} \left[(1+c_r^2)\mu_0-\mu_2\right]$ and $\bar{\lambda}^{\mathrm{sat}}_{z} = N_rM_{\mathrm{vis}} \left[\mu_2-2b_r\mu_1+b_r^2\mu_0\right]$.
\end{proposition}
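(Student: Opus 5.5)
The plan is to mirror the PPP computation that gave $\mathbb E_{\rho,\phi_s}[\mathbf J_s]=\operatorname{diag}(\mu_x,\mu_x,\mu_z)$, with the planar geometry replaced by the spherical-cap geometry. I would proceed in four steps.

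\textbf{Step 1: the visible-cap distribution.} On a sphere, area in a cap is proportional to $1-\cos\vartheta$, since $\mathrm dA\propto\sin\vartheta\,\mathrm d\vartheta\,\mathrm d\varphi$. A point uniform over the cap $\vartheta\le\vartheta_{\max}$ therefore has $x=\cos\vartheta$ uniform on $[x_0,1]$. Its azimuth $\varphi$ about the target's zenith is uniform on $[0,2\pi)$ and independent of $x$. This gives the first claim.

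\textbf{Step 2: the LoS vector in the local frame.} I would write $\mathbf a_{o,k}$ in the local East-North-Up frame of the target. With $\mathbf q_0=R_0\mathbf e_z$ locally, the satellite sits at
\[
R_s[\sin\vartheta\cos\varphi,\ \sin\vartheta\sin\varphi,\ x]^T,
\]
so
\[
\mathbf a=-\frac{1}{d(x)}\,[R_s\sin\vartheta\cos\varphi,\ R_s\sin\vartheta\sin\varphi,\ R_sx-R_0]^T .
\]
Its vertical component is $-s(x)$, which by the elevation formula equals $-\sin\epsilon$. The squared horizontal norm is $1-s^2(x)$, because $\mathbf a$ is a unit vector. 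This plays the role of $a_s^2$ and $b_s$ in the PPP analysis, with $b_s$ replaced by $s(x)$.

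\textbf{Step 3: the averaged outer product.} For each receiver, $\mathbf g_{n}=\mathbf a+\mathbf u_n$ with
\[
\mathbf u_n=[-c_r\cos\theta_n,\ -c_r\sin\theta_n,\ b_r]^T .
\]
I would expand $\mathbf g_n\mathbf g_n^T=\mathbf a\mathbf a^T+\mathbf a\mathbf u_n^T+\mathbf u_n\mathbf a^T+\mathbf u_n\mathbf u_n^T$ and average term by term.
\begin{itemize}
\item Averaging over $\varphi$ kills every term linear in $\cos\varphi$ or $\sin\varphi$, including all horizontal--vertical cross terms of $\mathbf a\mathbf a^T$.
\item Summing over the equally spaced $\theta_n$ kills the terms linear in $\cos\theta_n$ or $\sin\theta_n$. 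This needs $N_r\ge3$, which is the same symmetry assumption already used to make $\mathbf J_{\rm T}$ diagonal.
\end{itemize}
What survives is diagonal. The horizontal entries are $\tfrac12(1-s^2)+\tfrac12 c_r^2$ per receiver, and the vertical entry is $(b_r-s)^2$. Multiplying by the weight $K_{\rm eff}^{\rm sat}/D(x)$ and taking the expectation over $x$ gives
\[
\tfrac{N_r}{2}\big[(1+c_r^2)\mu_0-\mu_2\big],\qquad N_r\big[\mu_2-2b_r\mu_1+b_r^2\mu_0\big].
\]
Here $\mu_m$ is defined as $\mathbb E_x\big[K_{\rm eff}^{\rm sat}s^m(x)/D(x)\big]$. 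Using $s^m/D=(R_sx-R_0)^m/D^{m/2+1}$ reproduces the stated integral.

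\textbf{Step 4: summing over observations.} Summing over visible satellite-time observations, and invoking Campbell-type linearity of expectation with $M_{\rm vis}$ as the mean count, multiplies the per-observation mean by $M_{\rm vis}$. This is valid because each observation is approximated as an i.i.d. draw from the cap, independent of the count.

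I expect the main obstacle to be Step 2, namely setting up the frame $\mathbf U_0$ so that the vertical LoS component equals exactly $s(x)$ and the horizontal azimuth is uniform and independent of $x$; the latter holds only because of the isotropic cap approximation. Everything after that is the same symmetry bookkeeping as in the PPP case.
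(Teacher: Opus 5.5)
Your proposal is correct and follows essentially the same route as the paper's proof: write the LoS vector as $[-\sqrt{1-s^2}\cos\alpha,\,-\sqrt{1-s^2}\sin\alpha,\,-s]^T$, let the isotropic azimuth and the equally spaced receiver angles cancel all cross terms, and replace the weighted observation sums $\mathcal I_m$ by $M_{\mathrm{vis}}\mu_m$. Two details you supply are left implicit in the paper and are worth keeping: the cap-area argument for why $x=\cos\vartheta$ is uniform on $[x_0,1]$, and the requirement $N_r\ge 3$ for the two horizontal diagonal entries to be equal.
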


\begin{IEEEproof}
Please refer to Appendix E.
\end{IEEEproof}

Since the terrestrial FIM under the same symmetric deployment is $ \mathbf J_{\mathrm T}
    =
    \operatorname{diag}
    \left(
    \lambda^{\mathrm{ter}}_{x},
    \lambda^{\mathrm{ter}}_{x},
    \lambda^{\mathrm{ter}}_{z}
    \right)$, the mean hybrid FIM becomes $\overline{\mathbf J}^{\mathrm W}_{\mathrm{hyb}}
    \approx
    \operatorname{diag}
    \left(
    \bar{\lambda}^{\mathrm W}_{x},
    \bar{\lambda}^{\mathrm W}_{x},
    \bar{\lambda}^{\mathrm W}_{z}
    \right)$ with $\bar{\lambda}^{\mathrm W}_{x}= \lambda^{\mathrm{ter}}_{x} + \bar{\lambda}^{\mathrm{sat}}_{x}$  and $
    \bar{\lambda}^{\mathrm W}_{z}
    =
    \lambda^{\mathrm{ter}}_{z}
    +
    \bar{\lambda}^{\mathrm{sat}}_{z}$. The hybrid FIM depends on the global longitude offset $\bar{\Omega}$ and the global phase offset $\bar u$. The corresponding average Walker CRLB is defined as
\begin{equation}
    \overline{\mathrm{CRLB}}^{\mathrm W}
    =
    \mathbb E_{\boldsymbol{\xi}}
    \left[
    \operatorname{tr}
    \left(
    \left(
    \mathbf J^{\mathrm W}_{\mathrm{hyb}}(\boldsymbol{\xi})
    \right)^{-1}
    \right)
    \right],
    \quad
    \boldsymbol{\xi}=(\bar{\Omega},\bar u).
    \label{eq:average_walker_crlb_exact}
\end{equation}
where $ \mathbf J^{\mathrm W}_{\mathrm{hyb}}$ is the hybrid FIM under the Walker constellation.

\begin{proposition}
We define $ \kappa_x
    =
    \frac{1}{\bar{\lambda}^{\mathrm W}_{x}},
    $ and $
   \kappa_z
    =
    \frac{1}{\bar{\lambda}^{\mathrm W}_{z}}$. Under the visible-cap approximation, and when the Walker FIM fluctuation is sufficiently small such that third- and higher-order terms in the inverse-FIM expansion are negligible, the closed-form Taylor approximation of the average Walker CRLB is
\begin{equation}
    \overline{\mathrm{CRLB}}^{\mathrm W}
    \approx
    \frac{2}{\bar{\lambda}^{\mathrm W}_{x}}
    +
    \frac{1}{\bar{\lambda}^{\mathrm W}_{z}}
    +
    \mathcal T^{\mathrm W}.
    \label{eq:visible_closed_form_taylor_crlb}
\end{equation}
where $   
    \mathcal T^{\mathrm W}
    ={}
    2\kappa_x^3
    \mathcal M_{xx}
    +
    \kappa_z^3
    \mathcal M_{zz}
    + 2\kappa_x^3
    \mathcal M_{xy}
    \nonumber
    + 2\left(
        \kappa_x^2\kappa_z
        +
        \kappa_x\kappa_z^2
    \right)
    \mathcal M_{xz}$ and  $\mathcal M_{ij}$ denotes the second-order moment of the corresponding Walker hybrid FIM fluctuation.
\end{proposition}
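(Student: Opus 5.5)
The plan mirrors the PPP derivation that produced \eqref{eq:closed_form_crlb_final}, with the Walker hybrid FIM in place of the PPP one. We write $\mathbf J^{\mathrm W}_{\mathrm{hyb}}(\boldsymbol\xi)=\overline{\mathbf J}^{\mathrm W}_{\mathrm{hyb}}+\Delta\mathbf J^{\mathrm W}(\boldsymbol\xi)$. Here the mean is taken over $\boldsymbol\xi=(\bar\Omega,\bar u)$, and Proposition~2 gives it as $\operatorname{diag}(\bar\lambda^{\mathrm W}_x,\bar\lambda^{\mathrm W}_x,\bar\lambda^{\mathrm W}_z)$. The terrestrial FIM is deterministic, so $\Delta\mathbf J^{\mathrm W}$ comes only from the satellite term \eqref{eq:walker_satellite_fim}, and $\mathbb E_{\boldsymbol\xi}[\Delta\mathbf J^{\mathrm W}]=\mathbf 0$ by construction. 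We then apply the second-order Neumann/Taylor expansion of the inverse,
\[
(\overline{\mathbf J}+\Delta\mathbf J)^{-1}\approx\overline{\mathbf J}^{-1}-\overline{\mathbf J}^{-1}\Delta\mathbf J\,\overline{\mathbf J}^{-1}+\overline{\mathbf J}^{-1}\Delta\mathbf J\,\overline{\mathbf J}^{-1}\Delta\mathbf J\,\overline{\mathbf J}^{-1}.
\]
We drop the third- and higher-order terms under the stated small-fluctuation assumption and take trace and expectation. The linear term vanishes. The zeroth-order term gives $2/\bar\lambda^{\mathrm W}_x+1/\bar\lambda^{\mathrm W}_z$.

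Next we evaluate the quadratic term using that $\overline{\mathbf J}^{-1}=\operatorname{diag}(\kappa_x,\kappa_x,\kappa_z)$ is diagonal. Writing $\kappa_1=\kappa_2=\kappa_x$ and $\kappa_3=\kappa_z$, and using the symmetry of $\Delta\mathbf J$, we get
\[
\operatorname{tr}\bigl(\overline{\mathbf J}^{-1}\Delta\mathbf J\,\overline{\mathbf J}^{-1}\Delta\mathbf J\,\overline{\mathbf J}^{-1}\bigr)=\sum_{i,j}\kappa_i^2\kappa_j(\Delta J_{ij})^2 .
\]
We group the terms as follows. The diagonal terms give $\kappa_x^3[(\Delta J_{xx})^2+(\Delta J_{yy})^2]+\kappa_z^3(\Delta J_{zz})^2$. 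The $(x,y)$ pair gives $(\kappa_x^2\kappa_x+\kappa_x^2\kappa_x)(\Delta J_{xy})^2=2\kappa_x^3(\Delta J_{xy})^2$. The pairs $(x,z)$ and $(y,z)$ each give $(\kappa_x^2\kappa_z+\kappa_z^2\kappa_x)(\Delta J)^2$. This is the same bookkeeping as in Proposition~1.

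Then we define $\mathcal M_{ij}=\mathbb E_{\boldsymbol\xi}[(\Delta J^{\mathrm W}_{ij})^2]$ and invoke symmetry under the visible-cap approximation. Cap observations that are uniform in azimuth, combined with the equiangular receiver ring, make the $x$ and $y$ directions statistically equivalent. Hence $\mathcal M_{yy}=\mathcal M_{xx}$ and $\mathcal M_{yz}=\mathcal M_{xz}$, and the sum collapses exactly to $\mathcal T^{\mathrm W}$. For completeness we would also indicate how $\mathcal M_{ij}$ is computed. We neglect the cross-observation covariances, as stated. The fluctuation is then treated as a sum of independent single-observation contributions, so $\mathcal M_{ij}\approx M_{\mathrm{vis}}\,\mathbb E_x[J_{ij}^2]$, by analogy with the Campbell-type result of Proposition~1. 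The single-observation second moments are evaluated with $x$ uniform on $[x_0,1]$ and integrals of the type in \eqref{eq:first_order_visible_moments_integral} with higher powers.

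The main obstacle is justifying the $x$--$y$ symmetry and the treatment of the moments, because the Walker geometry is not isotropic. For a fixed inclination $\iota$ and a target at a given latitude, the visible passes have preferred headings. Strictly, $\mathcal M_{xx}\neq\mathcal M_{yy}$, and the off-diagonal mean entries need not vanish. I would handle this by stating explicitly that the result holds under the isotropic visible-cap approximation of Proposition~2. Under that approximation the mean is diagonal, azimuthal averaging equalises the $x$ and $y$ moments, and neglecting Walker-induced correlations removes the cross-observation covariance terms. With these assumptions in place, the remaining steps are direct algebra.
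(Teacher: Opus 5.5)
Your derivation of the displayed formula is correct and follows the paper's route. It expands the inverse FIM to second order about the diagonal mean $\operatorname{diag}(\bar\lambda^{\mathrm W}_x,\bar\lambda^{\mathrm W}_x,\bar\lambda^{\mathrm W}_z)$, carries over the bookkeeping $\sum_{i,j}\kappa_i^2\kappa_j(\Delta J_{ij})^2$ from Proposition~1, and takes $\mathcal M_{yy}=\mathcal M_{xx}$, $\mathcal M_{yz}=\mathcal M_{xz}$ from the horizontally isotropic visible-cap approximation.

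One thing needs fixing: your side remark on evaluating $\mathcal M_{ij}$, which is exactly what the paper's Appendix~F computes. You treat the fluctuation as a sum of independent single-observation contributions with no cross-observation covariance, and the visible count is taken as fixed at $M_{\mathrm{vis}}$. Under those assumptions the correct result is $\mathcal M_{ij}\approx M_{\mathrm{vis}}\bigl(\mathbb E_x[J_{ij}^2]-(\mathbb E_x[J_{ij}])^2\bigr)$, that is, $M_{\mathrm{vis}}$ times the single-observation variance, not $M_{\mathrm{vis}}\,\mathbb E_x[J_{ij}^2]$.

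The Campbell form $\bar K\,\mathbb E[J_{s,ij}^2]$ in Proposition~1 holds only because the PPP count is itself Poisson. The variance of that count contributes exactly the extra $\bar K(\mathbb E[J_{s,ij}])^2$, and the Walker model has no such Poisson count. This is why the paper's $\mathcal M_{xx}$ subtracts $M_{\mathrm{vis}}N_r^2\bigl[\tfrac{(1+c_r^2)\mu_0-\mu_2}{2}\bigr]^2$ and $\mathcal M_{zz}$ subtracts $M_{\mathrm{vis}}N_r^2\bigl[\mu_2-2b_r\mu_1+b_r^2\mu_0\bigr]^2$. By contrast, $\mathcal M_{xy}$ and $\mathcal M_{xz}$ need no subtraction, since their single-observation means vanish under azimuthal averaging.

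So your Campbell analogy would reproduce the paper's off-diagonal moments but overstate the diagonal ones by a term of the same order. This does not affect the stated formula, where $\mathcal M_{ij}$ is left abstract.
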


\begin{IEEEproof}
Please refer to Appendix F.
\end{IEEEproof}

\section{CRLB-Oriented Cooperative Satellite Selection}
This section investigates cooperative satellite selection for the representative target considered in Section III, whose index $i$ is fixed and omitted hereafter. Since localisation performance depends jointly on sensing reliability and geometry, selecting satellites based only on their SCNR may be suboptimal. We first characterise the marginal trace-CRLB reduction of each candidate satellite, analyse the roles of directional complementarity and information saturation, and derive analytical bounds and a sufficient ordering condition. Guided by these results, we propose a CRLB-oriented greedy satellite selection strategy that sequentially minimises the hybrid trace-CRLB. We further compare satellite-only and hybrid-aware selection.
\vspace{-1em}
\subsection{CRLB-Oriented Satellite Selection Criteria}
Let $\mathcal S^{\mathrm W}
=\{1,\ldots,N_{\mathrm{orb}}N_{\mathrm{sat}}\}$ denote the set of all Walker satellites. A satellite is regarded as a selectable candidate if its elevation angle exceeds the minimum visibility threshold during at least one snapshot of the observation window. Accordingly
\begin{equation}
    \mathcal S_{\mathrm{vis}}^{\mathrm W}
    =
    \left\{
        s\in\mathcal S^{\mathrm W}
        \,\middle|\,
        \max_{q\in\{1,\ldots,Q\}}
        \epsilon_s(t_q)
        \geq
        \epsilon_{\min}
    \right\},
    \label{eq:visible_walker_candidate_set}
\end{equation}
where $K_{\mathrm{vis}} = |\mathcal S_{\mathrm{vis}}^{\mathrm W}|$.

For each candidate Walker satellite $s$, $\mathbf J_s^{\mathrm{W,sat}}$ denotes its FIM contribution for the representative target, obtained from (42) by retaining only the contribution of satellite $s$ and summing over all snapshots and radar receivers. For a selected subset $\mathcal S_M\subseteq\mathcal S_{\rm vis}^{\rm W}$ with $|\mathcal S_M|=M$, the corresponding hybrid FIM is
\begin{equation}
    \mathbf J^{\mathrm{hyb}}
    \left(
        \mathcal S_M
    \right)
    =
    \mathbf J^{\mathrm{ter}}
    +
    \sum_{s\in\mathcal S_M}
    \mathbf J_s^{\mathrm{W,sat}},
    \label{eq:selected_subset_hybrid_fim}
\end{equation}
where $\mathbf J^{\mathrm{ter}}$ denotes the terrestrial FIM for the considered receiver deployment.

The corresponding trace-CRLB objective is defined as $
    \mathcal G
    \left(
        \mathcal S_M
    \right)
    =
    \operatorname{tr}
    \left[
        \left(
            \mathbf J^{\mathrm{hyb}}
            \left(
                \mathcal S_M
            \right)
        \right)^{-1}
    \right] $. Accordingly, the optimal satellite subset is given by
\vspace{-0.3em}
\begin{equation}
    \mathcal S_M^{\star}
    =
    \arg\min_{\mathcal S_M\subseteq\mathcal S_{\mathrm{vis}}^{W},\,
    |\mathcal S_M|=M}
    \mathcal G(\mathcal S_M).
    \label{eq:global_crlb_selection_subset}
\end{equation}
\vspace{-0.5em}

Solving~\eqref{eq:global_crlb_selection_subset} by exhaustive search requires checking \(\binom{K_{\mathrm{vis}}}{M}\) possible subsets. This becomes computationally prohibitive when the number of visible candidates is large. Therefore, the exhaustive solution is used as a benchmark, while we adopt a low-complexity greedy procedure.

\subsubsection{{Marginal CRLB Analysis}}
To characterise the localisation improvement provided by each candidate satellite, consider the currently selected subset $\mathcal S_{\mathrm c}\subseteq \mathcal S_{\mathrm{vis}}^{\mathrm W}$, where $\mathcal S_{\mathrm c}$ contains the visible Walker satellites that have already been selected. Consider a remaining candidate satellite $s\in\mathcal S_{\mathrm{vis}}^{\mathrm W} \setminus\mathcal S_{\mathrm c}$. Since $\mathbf J_s^{\mathrm{W,sat}}$ is symmetric positive semidefinite, it admits the factorisation as $\mathbf J_s^{\mathrm{W,sat}} = \mathbf L_s\mathbf L_s^{\mathrm T}$ where $\mathbf L_s=
(\mathbf J_s^{\mathrm{W,sat}})^{1/2}$. After incorporating satellite $s$, the updated hybrid FIM becomes
\begin{equation}
    \mathbf J^{\mathrm{hyb}}
    \left(
        \mathcal S_{\mathrm c}\cup\{s\}
    \right)
    =
    \mathbf J^{\mathrm{hyb}}
    \left(
        \mathcal S_{\mathrm c}
    \right)
    +
    \mathbf L_s\mathbf L_s^{\mathrm T}.
    \label{eq:updated_subset_hybrid_fim}
\end{equation}

Applying the Woodbury matrix identity\cite{b30} gives
\begin{align}
    &
    \left[
        \mathbf J^{\mathrm{hyb}}
        \left(
            \mathcal S_{\mathrm c}\cup\{s\}
        \right)
    \right]^{-1}
    =
    \left[
        \mathbf J^{\mathrm{hyb}}
        \left(
            \mathcal S_{\mathrm c}
        \right)
    \right]^{-1}
    \nonumber\\
    &\quad-
    \left[
        \mathbf J^{\mathrm{hyb}}
        \left(
            \mathcal S_{\mathrm c}
        \right)
    \right]^{-1}
    \mathbf L_s
    \Bigg(
        \mathbf I
        +
        \mathbf L_s^{\mathrm T}
        \left[
            \mathbf J^{\mathrm{hyb}}
            \left(
                \mathcal S_{\mathrm c}
            \right)
        \right]^{-1}
        \mathbf L_s
    \Bigg)^{-1}
    \nonumber\\
    &\qquad\times
    \mathbf L_s^{\mathrm T}
    \left[
        \mathbf J^{\mathrm{hyb}}
        \left(
            \mathcal S_{\mathrm c}
        \right)
    \right]^{-1}.
    \label{eq:woodbury_satellite_selection}
\end{align}

Therefore, the trace-CRLB reduction by adding satellite $s$ is  $\Delta_s
    \left(
        \mathcal S_{\mathrm c}
    \right)
    =
    \mathcal G
    \left(
        \mathcal S_{\mathrm c}
    \right)
    -
    \mathcal G
    \left(
        \mathcal S_{\mathrm c}\cup\{s\}
    \right)
    \nonumber$.  
Substituting \eqref{eq:woodbury_satellite_selection} into the definition of $\Delta_s(\mathcal S_{\mathrm c})$ gives
\vspace{-0.5em}
\begin{align}
\Delta_s(\mathcal S_{\mathrm c})
&=
\operatorname{tr}\!\Bigg[
\left[\mathbf J^{\mathrm{hyb}}(\mathcal S_{\mathrm c})\right]^{-1}
\mathbf L_s
\left(
\mathbf I+\mathbf L_s^{\mathrm T}
\left[\mathbf J^{\mathrm{hyb}}(\mathcal S_{\mathrm c})\right]^{-1}
\mathbf L_s
\right)^{-1}
\nonumber\\
&\qquad\times
\mathbf L_s^{\mathrm T}
\left[\mathbf J^{\mathrm{hyb}}(\mathcal S_{\mathrm c})\right]^{-1}
\Bigg].
\label{eq:incremental_crlb_reduction_psd}
\end{align}
    
\vspace{-0.5em}
The matrix inside the trace in
\eqref{eq:incremental_crlb_reduction_psd} is positive
semidefinite. Hence, $ \Delta_s(\mathcal S_{\mathrm c})\geq 0$. Moreover, since $\mathbf J^{\mathrm{hyb}}(\mathcal S_{\mathrm c})\succ \mathbf 0$, the equality holds if and only if $\mathbf L_s=\mathbf 0$, or equivalently, $\mathbf J_s^{\mathrm{W,sat}}=\mathbf 0$. Therefore, any nonzero candidate-satellite FIM contribution strictly reduces the trace-CRLB.

\subsubsection{{SCNR Ordering and Geometric Complementarity}} We next compare the relative localisation gains of two candidate satellites from the same satellite constellation. We first consider the case where the two satellites provide the same sensing geometry but have different SCNR-dependent information strengths. Consider two candidate satellites $s_1,s_2\in
\mathcal S_{\mathrm{vis}}^{\mathrm W} \setminus\mathcal S_{\mathrm c}$. Suppose that $ \mathbf J_{s_i}^{\mathrm{W,sat}} = \omega_{s_i}^{\mathrm{eff}}  \mathbf J^{\mathrm{geo}}$ for $ i\in\{1,2\}$, where $\mathbf J^{\mathrm{geo}}$ is the common geometry-information matrix and
$\omega_{s_i}^{\mathrm{eff}}$ is the corresponding SCNR-dependent effective information weight.  Therefore, as $\omega_{s_1}^{\mathrm{eff}}
\geq
\omega_{s_2}^{\mathrm{eff}}$, $\mathcal G
    \left(
        \mathcal S_{\mathrm c}\cup\{s_1\}
    \right)
    \leq
    \mathcal G
    \left(
        \mathcal S_{\mathrm c}\cup\{s_2\}
    \right)$. However, when two candidate satellites provide different sensing geometries, satellite $s_1$ is not guaranteed to have a lower trace-CRLB with $\omega_{s_1}^{\mathrm{eff}} > \omega_{s_2}^{\mathrm{eff}}$. This establishes that the selection also depends on how the directional information in $\mathbf J_{s_i}^{\mathrm{geo}}$ complements the information already contained in the current hybrid FIM. This dependence can be characterised exactly through the eigenstructure of the current hybrid FIM. Since $\mathbf J^{\mathrm{hyb}}(\mathcal S_{\mathrm c})$ is symmetric and positive definite, let $\mathbf J^{\mathrm{hyb}}(\mathcal S_{\mathrm c})
=\mathbf V_{\mathrm c}\mathbf E_{\mathrm c}\mathbf V_{\mathrm c}^{\mathrm T}$ with  $\mathbf E_{\mathrm c}
=\operatorname{diag}(e_{\mathrm c,1},e_{\mathrm c,2},e_{\mathrm c,3})$, where the columns of $\mathbf V_{\mathrm c}$ are the orthonormal eigenvectors of the current hybrid FIM and $e_{\mathrm c,j}$ denotes the information strength along the $j$th principal direction.

Define the FIM of the candidate satellite $s$ represented in the eigenbasis of the current hybrid FIM as $ \mathbf Z_s = \mathbf V_{\mathrm c}^{\mathrm T}  \mathbf J_s^{\mathrm{W,sat}} \mathbf V_{\mathrm c}$. Then, the updated FIM after adding satellite $s$ is
\begin{equation}
\mathbf J^{\mathrm{hyb}}(\mathcal S_{\mathrm c})
+\mathbf J_s^{\mathrm{W,sat}}
=
\mathbf V_{\mathrm c}
(\mathbf E_{\mathrm c}+\mathbf Z_s)
\mathbf V_{\mathrm c}^{\mathrm T}.
\label{eq:updated_fim_current_eigenbasis}
\end{equation}

Therefore, marginal trace-CRLB reduction can be expressed exactly as $
    \Delta_s
    \left(
        \mathcal S_{\mathrm c}
    \right)
    =
    \operatorname{tr}
    \left[
        \mathbf E_{\mathrm c}^{-1}
        -
        \left(
            \mathbf E_{\mathrm c}
            +
            \mathbf Z_s
        \right)^{-1}
    \right]
    \label{eq:exact_candidate_gain_eigenbasis}
$, which shows that the marginal trace-CRLB reduction depends jointly on the current directional information strengths and the directional structure of the candidate satellite FIM. Consequently, candidate satellites with comparable overall information strengths can yield different trace-CRLB reductions.

To obtain an interpretable normalised representation, we define $ \mathbf Q_s =\mathbf E_{\mathrm c}^{-1/2}\mathbf Z_s \mathbf E_{\mathrm c}^{-1/2}$, where $\mathbf Q_s$ represents the candidate-satellite information $\mathbf Z_s$ normalised relative to the information already contained in $\mathbf E_{\mathrm c}$. Then, $\left( \mathbf E_{\mathrm c}+\mathbf Z_s \right)^{-1} =\mathbf E_{\mathrm c}^{-1/2} \left(\mathbf I+\mathbf Q_s \right)^{-1} \mathbf E_{\mathrm c}^{-1/2}$ can be expressed. Therefore, the trace-CRLB reduction can be shown as
\begin{align}
    \Delta_s
    \left(
        \mathcal S_{\mathrm c}
    \right)
    &=
    \operatorname{tr}
    \Big[
        \mathbf E_{\mathrm c}^{-1/2}
        \mathbf Q_s
        (\mathbf I+\mathbf Q_s)^{-1}
        \mathbf E_{\mathrm c}^{-1/2}
    \Big].
    \label{eq:normalised_exact_marginal_gain}
\end{align}

To reveal how the normalised candidate information contributes to the marginal trace-CRLB reduction, let $\mathbf Q_s=\mathbf U_s\mathbf D_s\mathbf U_s^{\mathrm T}$ with $\mathbf D_s=\operatorname{diag}(q_{s,1},q_{s,2},q_{s,3})$, where $\mathbf U_s$ contains the orthonormal eigenvectors of $\mathbf Q_s$, and $q_{s,i}$ denotes the corresponding eigenvalue.  It then follows that
\begin{equation}
\mathbf Q_s(\mathbf I+\mathbf Q_s)^{-1}
=
\mathbf U_s
\operatorname{diag}\!\left(
\frac{q_{s,1}}{1+q_{s,1}},
\frac{q_{s,2}}{1+q_{s,2}},
\frac{q_{s,3}}{1+q_{s,3}}
\right)
\mathbf U_s^{\mathrm T}.
\label{eq:effective_normalised_candidate_information}
\end{equation}

Substituting
\eqref{eq:effective_normalised_candidate_information}
into \eqref{eq:normalised_exact_marginal_gain} and applying
the cyclic property of the trace gives
\vspace{-1em}
\begin{align}
    \Delta_s(\mathcal S_{\mathrm c})
   =
    \sum_{i=1}^{3}
    \frac{q_{s,i}}{1+q_{s,i}}
    \sum_{j=1}^{3}
    \frac{
        [\mathbf U_s]_{ji}^{2}
    }{
        e_{\mathrm c,j}
    }.
    \label{eq:exact_directional_incremental_gain}
\end{align}

\vspace{-0.5em}
Consequently, for candidate information with the same effective strength, allocating a larger fraction towards a direction with a smaller $e_{\mathrm c,j}$ yields a larger contribution to the trace-CRLB reduction, which shows the importance of considering whether the new information is directionally complementary to the current hybrid sensing configuration. Therefore, the exact selection metric inherently favours candidate information that supplements weakly informed directions while accounting for the saturation of increasingly strong information modes.

\subsubsection{{Bounds and Ordering Conditions for Marginal CRLB Reduction}}
Building on the exact marginal trace-CRLB reduction, we next establish
analytical bounds on $\Delta_s(\mathcal S_{\mathrm c})$ and use them to derive a sufficient condition for ordering candidate satellites. Since $\mathbf Q_s\succeq\mathbf 0$, its eigenvalues satisfy $\lambda_{\min}(\mathbf Q_s)\le q_{s,i}\le \lambda_{\max}(\mathbf Q_s)$ for $i\in\{1,2,3\}$, where $\lambda_{\min}(\mathbf Q_s)=\min_i q_{s,i}$ and $\lambda_{\max}(\mathbf Q_s)=\max_i q_{s,i}$. Since $q_{s,i}\geq0$, we have $\frac{q_{s,i}}{1+\lambda_{\max}(\mathbf Q_s)}
\leq \frac{q_{s,i}}{1+q_{s,i}}
\leq \frac{q_{s,i}}{1+\lambda_{\min}(\mathbf Q_s)}$. Hence
\begin{equation}
    \frac{
        \mathbf Q_s
    }{
        1+\lambda_{\max}(\mathbf Q_s)
    }
    \preceq
    \mathbf Q_s
    \left(
        \mathbf I+\mathbf Q_s
    \right)^{-1}
    \preceq
    \frac{
        \mathbf Q_s
    }{
        1+\lambda_{\min}(\mathbf Q_s)
    }.
    \label{eq:matrix_bound}
\end{equation}

Applying the congruence transformation $\mathbf E_{\mathrm c}^{-1/2}(\cdot)
\mathbf E_{\mathrm c}^{-1/2}$ to \eqref{eq:matrix_bound} and taking the trace preserves the ordering, which gives
\begin{equation}
    \frac{
        \Psi_s(\mathcal S_{\mathrm c})
    }{
        1+\lambda_{\max}(\mathbf Q_s)
    }
    \leq
    \Delta_s(\mathcal S_{\mathrm c})
    \leq
    \frac{
        \Psi_s(\mathcal S_{\mathrm c})
    }{
        1+\lambda_{\min}(\mathbf Q_s)
    }.
    \label{eq:marginal_gain_bound}
\end{equation}
where $ \Psi_s(\mathcal S_{\mathrm c}) = \operatorname{tr}\left[ \mathbf E_{\mathrm c}^{-1/2} \mathbf Q_s\mathbf E_{\mathrm c}^{-1/2} \right]$.

The preceding bounds further provide a sufficient condition for strictly ordering two candidate satellites. For two candidates $s_1$ and $s_2$, if
\begin{equation}
    \frac{
        \Psi_{s_1}(\mathcal S_{\mathrm c})
    }{
        1+\lambda_{\max}(\mathbf Q_{s_1})
    }
    >
    \frac{
        \Psi_{s_2}(\mathcal S_{\mathrm c})
    }{
        1+\lambda_{\min}(\mathbf Q_{s_2})
    },
    \label{eq:sufficient_ordering_condition}
\end{equation}
then $ \Delta_{s_1}(\mathcal S_{\mathrm c}) > \Delta_{s_2}(\mathcal S_{\mathrm c})$.

\subsubsection{Greedy Satellite Selection}
Guided by the preceding analysis, we construct a greedy selection procedure that, at each iteration, selects the candidate providing the largest exact marginal trace-CRLB reduction. Starting from $\widehat{\mathcal S}_0=\emptyset$, the proposed CRLB-greedy method selects, at the \(m\)th step, the satellite that yields the smallest trace-CRLB after being added to the currently selected set
\begin{equation}
    s_m^{\mathrm{CRLB}}
    =
    \arg\min_{s\in
    \mathcal S_{\mathrm{vis}}^{W}
    \setminus
    \widehat{\mathcal S}_{m-1}}
    \mathcal G
    \left(
    \widehat{\mathcal S}_{m-1}\cup\{s\}
    \right),
    \quad m=1,\ldots,M .
    \label{eq:crlb_greedy_step}
\end{equation}

The selected subset is updated as  $\widehat{\mathcal S}_m
    =
    \widehat{\mathcal S}_{m-1}
    \cup
    \left\{
    s_m^{\mathrm{CRLB}}
    \right\}$. After \(M\) iterations, \(\widehat{\mathcal S}_M\) is returned as the proposed CRLB-oriented cooperative satellite subset. Unlike single-factor selection rules, this criterion evaluates the SCNR-weighted hybrid FIM and therefore jointly accounts for sensing-link reliability and geometry-dependent information diversity.

\subsubsection{Benchmark Selection Criteria}
For comparison, two benchmark satellite-selection criteria are also considered. The first benchmark is the strongest-SCNR criterion, where for each visible candidate satellite \(s\), we define
\vspace{-0.5em}
\begin{equation}
    \eta_s
    =
    \sum_{q=1}^{Q}
    \chi_s(t_q)
    \sum_{n=1}^{N_r}
    \mathrm{SCNR}^{\mathrm{sat}}_{s,n}(t_q).
    \label{eq:aggregate_scnr_strength}
\end{equation}
\vspace{-0.8em}

The strongest-SCNR method selects the \(M\) satellites with the largest values of \(\eta_s\). This criterion favours satellites with strong sensing links, but it does not explicitly account for geometric diversity. The second benchmark is the angular-diversity criterion, which evaluates angular separation among the selected satellites. Let $\phi_s$ denote the representative target-to-satellite azimuth of satellite $s$, obtained by circularly averaging its azimuths over the visible snapshots. Since angular diversity cannot be evaluated before any satellite has been selected, the procedure is initialised using the satellite with the largest aggregate SCNR as  $s_1^{\mathrm{ang}} = \arg\max_{s\in\mathcal S_{\mathrm{vis}}^{W}} \eta_s$ and $
    \widehat{\mathcal S}_1^{\mathrm{ang}}
    =
    \left\{
        s_1^{\mathrm{ang}}
    \right\}$. The remaining satellites are selected sequentially according to the max-min angular separation rule
\begin{equation}
    s_m^{\mathrm{ang}}
    =
    \arg\max_{s\in
    \mathcal S_{\mathrm{vis}}^{W}
    \setminus
    \widehat{\mathcal S}_{m-1}^{\mathrm{ang}}}
    \min_{\ell\in
    \widehat{\mathcal S}_{m-1}^{\mathrm{ang}}}
    \Delta_{\phi}(s,\ell),
    \qquad m=2,\ldots,M,
    \label{eq:angular_greedy_selection}
\end{equation}
where $ \Delta_{\phi}(s,\ell) =\min \left\{|\phi_s-\phi_\ell|, 2\pi-|\phi_s-\phi_\ell|\right\}$, which denotes the circular angular separation between satellites $s$ and $\ell$. The selected subset is updated as $\widehat{\mathcal S}_m^{\mathrm{ang}} =
    \widehat{\mathcal S}_{m-1}^{\mathrm{ang}}
    \cup
    \left\{
        s_m^{\mathrm{ang}}
    \right\}$.
\vspace{-1em}
\subsection{Satellite-Only and Hybrid-Aware Satellite Selection} 
To investigate the role of terrestrial-satellite information complementarity, we compare satellite-only and hybrid-aware selection. For each Walker realisation, both methods operate on the same visible satellite candidate set and use the same satellite-assisted FIMs. Hence, both selection rules use the CRLB-oriented method developed in Section IV.A. Their distinction is whether the existing terrestrial FIM is incorporated during satellite selection. The hybrid-aware strategy includes the existing terrestrial FIM when evaluating each candidate satellite and therefore favours satellites whose information complements the terrestrial sensing geometry. Symmetric radar receiver deployment and asymmetric radar receiver deployment are considered. Under the symmetric deployment, the terrestrial sensing geometry exhibits a relatively weak horizontal directional preference. The symmetric deployment provides relatively balanced horizontal terrestrial information, while the asymmetric deployment produces a stronger directional imbalance. This controlled comparison allows the effect of terrestrial information complementarity on satellite selection to be examined.

\vspace{-0.7em}
\section{Numerical Results}
\label{sec:numerical_results}
\subsection{Simulation Setup}
The carrier frequency is set to \(f_c=15\) GHz. The receiver noise is characterised by a noise figure of \(\mathrm{NF}=8\) dB and a noise spectral density of \(N_0=-174\) dBm/Hz. The TBS transmit power is \(P_{\mathrm{TBS}}=30\) dBm. The LEO satellites operate at an altitude of \(H=500\) km. The transmit power of each satellite is \(P_{\mathrm{sat}}=55\) dBm. The antenna numbers are set to \(N_{\mathrm{TX}}=30\), \(N_{\mathrm{RX}}=4\), and \(N_{\mathrm{SAT}}=8\). The terrestrial ISAC system serves \(N_{\mathrm{com}}=2\) communication users and senses \(N_{\mathrm{tar}}=4\) UAV targets. The target height and TBS height are set to $h_t=80$ m and $h_T=10$ m, respectively. The Walker constellation adopts an orbital inclination of $\iota=53^\circ$. The sensing interval is set to $T_{\mathrm{obs}}=60$ s and is uniformly sampled at $Q=5$ observation instants. A minimum satellite elevation angle of $\epsilon_{\min}=10^\circ$ is imposed for satellite visibility.

\begin{figure}
\centering
\includegraphics[width=0.8\linewidth]{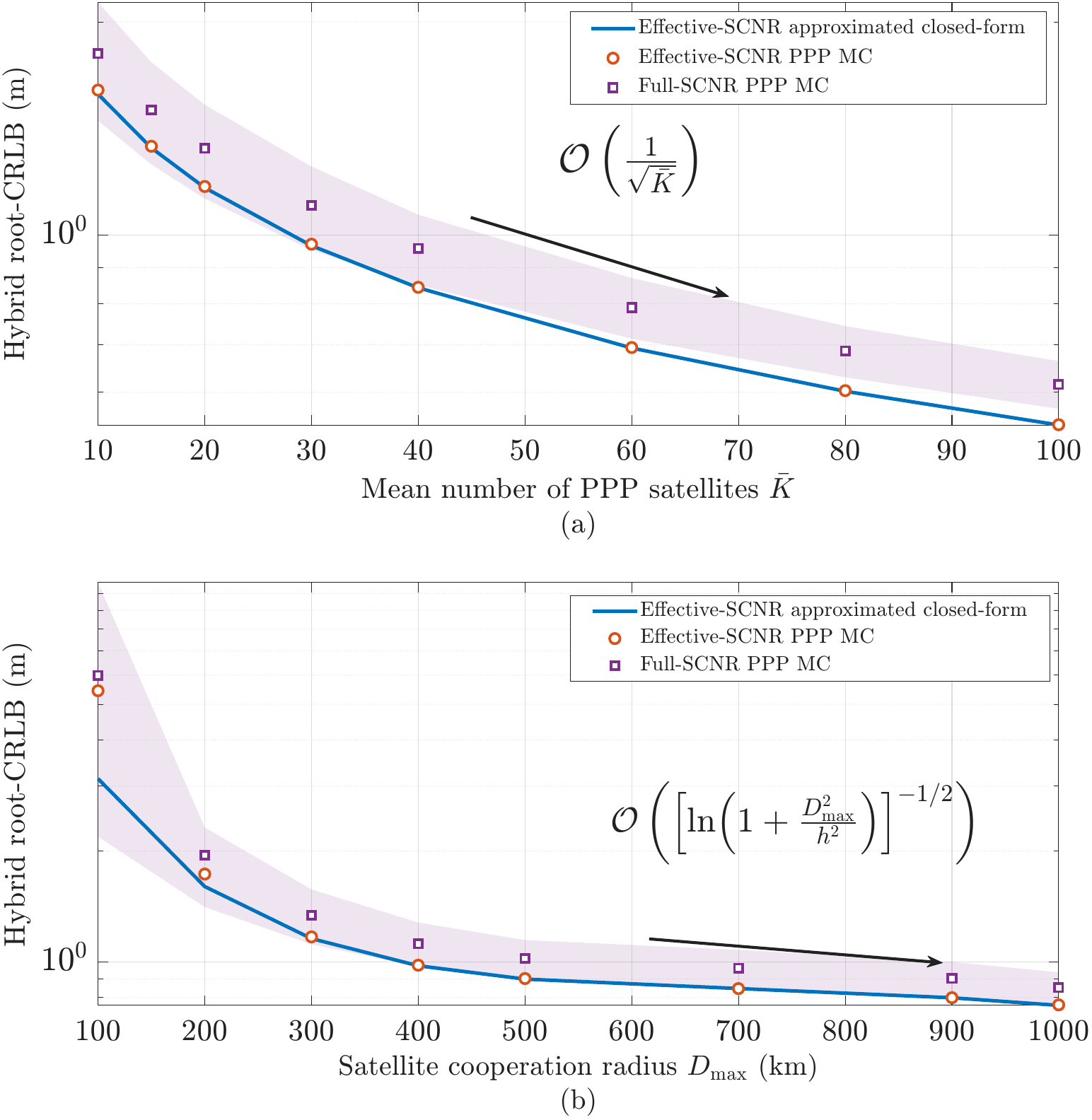} 
\vspace{-1.1em}\caption{Validation and scaling laws of the PPP-averaged hybrid root-CRLB versus (a) the mean number of cooperative satellites $\bar K$ and (b) the satellite cooperation radius $D_{\max}$. The shaded regions show the 10th-90th percentile of the full-SCNR PPP Monte Carlo results.}
\vspace{-1em}
\end{figure}

\begin{figure}[t]
\centering
\includegraphics[width=0.95\linewidth]{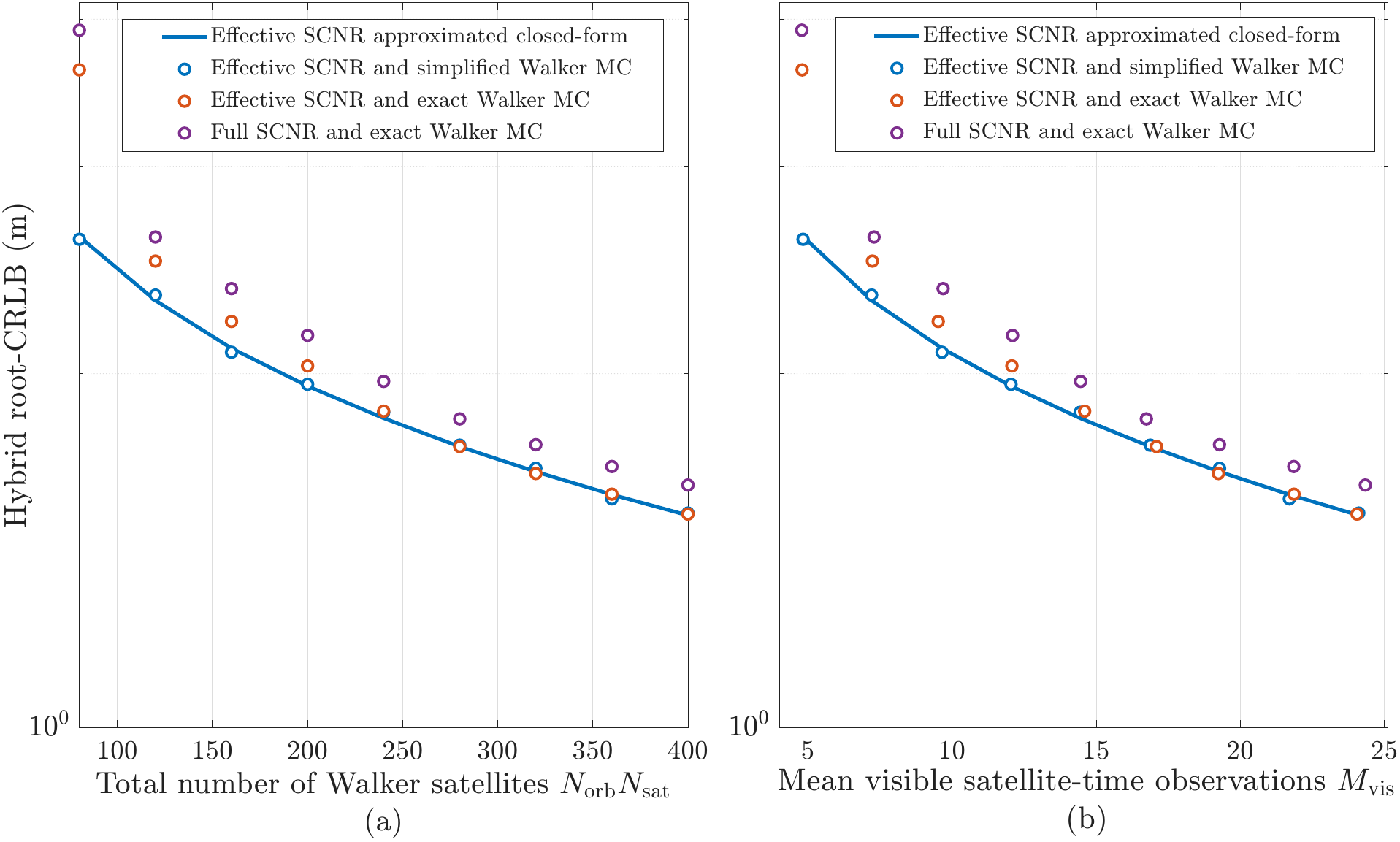} 
\vspace{-1em}\caption{Walker-based hybrid root-CRLB versus (a) the number of Walker satellites $N_{\mathrm{orb}}N_{\mathrm{sat}}$ and (b) the mean number of visible satellite-time observations $M_{\mathrm{vis}}$, comparing the effective-SCNR closed-form approximation with the simplified-Walker, exact-Walker, and full-SCNR Monte Carlo results.}
\vspace{-1.5em}
\end{figure}

\begin{figure*}
\centering
\includegraphics[width=0.8\linewidth]{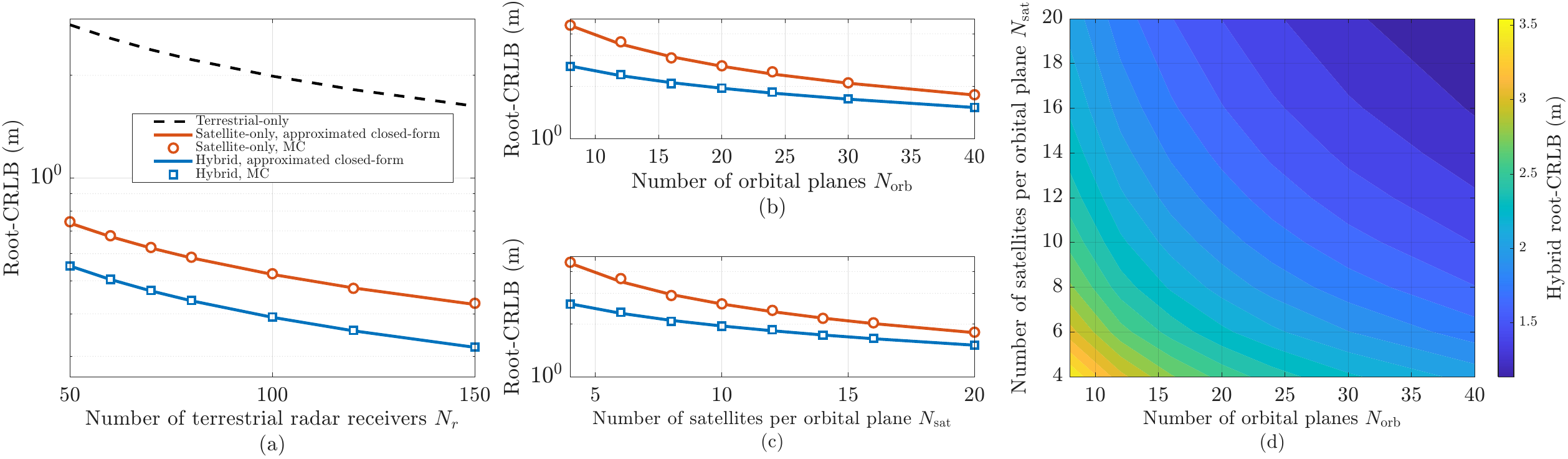} 
\vspace{-1em}\caption{Root-CRLB performance with respect to (a) the number of terrestrial radar receivers $N_r$, (b) the number of orbital planes $N_{\mathrm{orb}}$, and (c) the number of satellites per orbital plane $N_{\mathrm{sat}}$. (d) Hybrid root-CRLB over the two-dimensional Walker constellation design space spanned by $N_{\mathrm{orb}}$ and $N_{\mathrm{sat}}$.}
\vspace{-1em}
\end{figure*}
\vspace{-1em}
\subsection{Sensing Performance}
The following results assess the accuracy of the analytical approximations against Monte Carlo simulations under both the PPP and Walker models.

\subsubsection{Analytical Validation}
Fig.~3(a) shows that the full-SCNR Monte Carlo result closely follows the decay trend predicted by the effective-SCNR approximation, with the hybrid root-CRLB decreasing as $\mathcal{O}\!\left(1/\sqrt{\bar{K}}\right)$ as $\bar{K}$ increases. The contraction of the shaded percentile region with increasing $\bar{K}$ further indicates reduced sensitivity to individual PPP realisations as more cooperative satellites contribute to sensing. Fig.~3(b) further illustrates the effect of the cooperation radius $D_{\max}$. For $D_{\max}\ll h$, the rapid initial reduction in the root-CRLB results from the rapid accumulation of satellite-assisted Fisher information as the cooperation region expands. As $D_{\max}$ increases further, the root-CRLB approaches the asymptotic scaling $\mathcal{O}\!\left(\left[\ln\!\left(1+D_{\max}^{2}/h^{2}\right)\right]^{-1/2}\right)$, revealing progressively weaker marginal localisation gains from incorporating increasingly distant satellites. Fig.~4 evaluates the Walker-based approximation under practical orbital geometry and visibility constraints. As shown in Fig.~4(a), the hybrid root-CRLB decreases as the Walker constellation becomes denser under all considered models. Incorporating the exact Walker geometry introduces a moderate deviation, particularly in the sparse-constellation regime, while the full-SCNR model leads to further deviation without changing the overall decreasing trend. Fig.~4(b) further shows that the root-CRLB decreases with the mean number of visible satellite-time observations $M_{\mathrm{vis}}$ under all considered models, indicating that the sensing gain is directly associated with the number of satellite observations that effectively contribute to the FIM.
\begin{figure}[!t]
\centering
\vspace{-0.5em}
\includegraphics[width=0.8\linewidth]{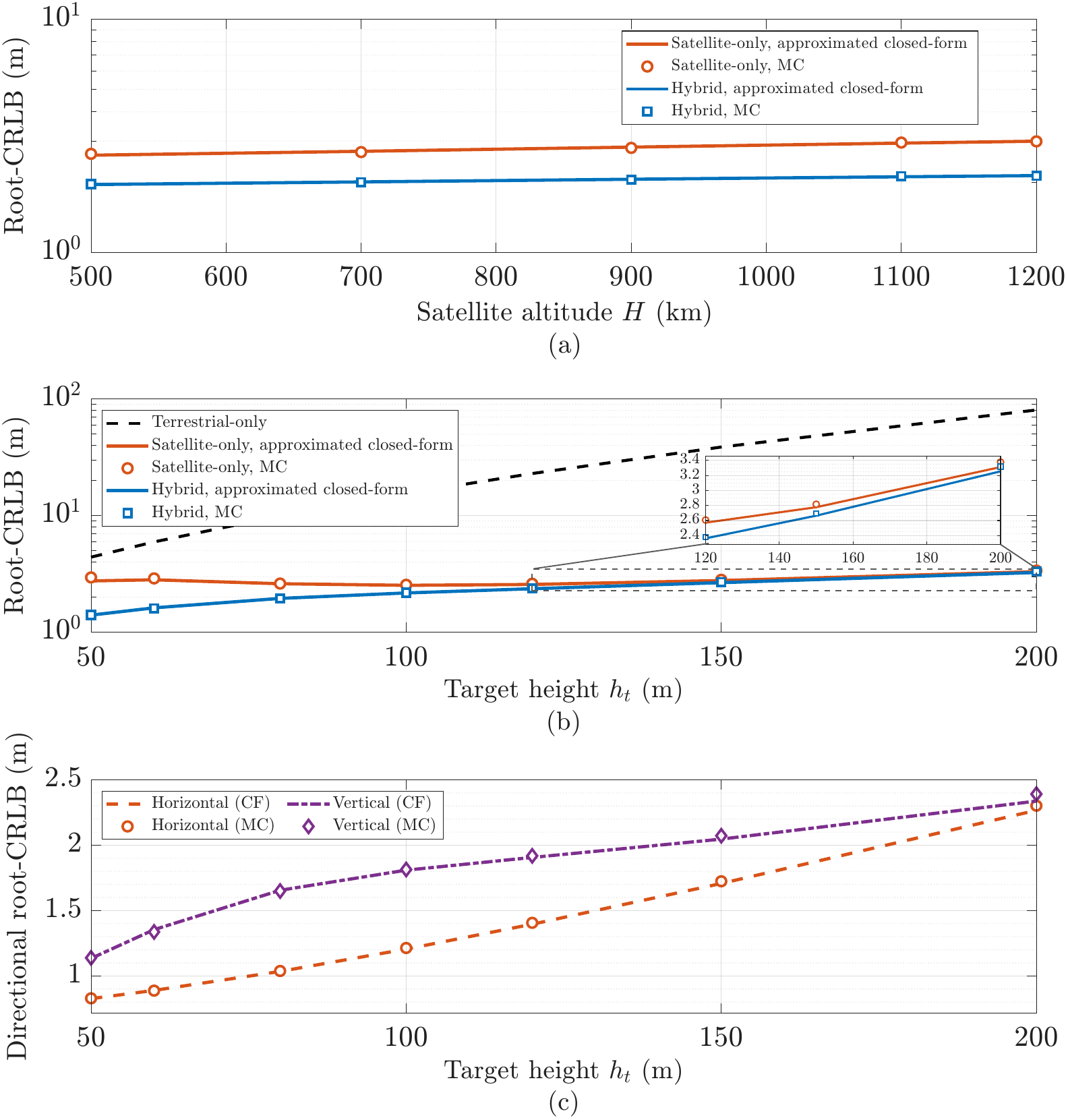} 
\vspace{-1em}\caption{Root-CRLB performance versus (a) satellite altitude $H$, (b) target height $h_t$, and (c) horizontal and vertical root-CRLB versus $h_t$.}
\vspace{-0.8em}
\end{figure}

\begin{figure}[!t]
\centering
\includegraphics[width=0.78\linewidth]{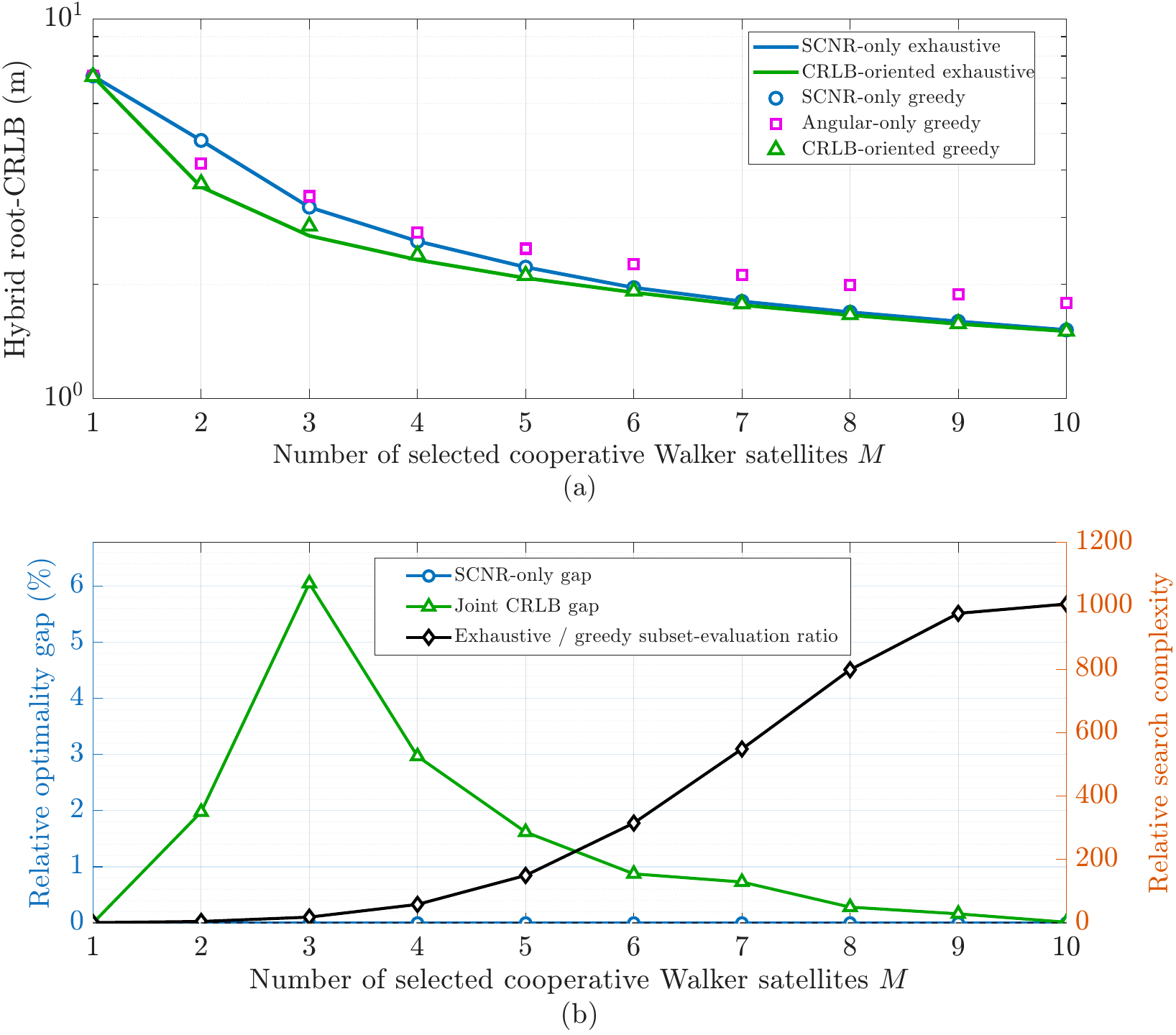} 
\vspace{-1em}\caption{Performance and complexity of cooperative Walker satellite selection. 
(a) Hybrid 3-D root-CRLB versus the number of selected satellites $M$. 
(b) Greedy optimality gap and relative search complexity versus exhaustive search.}
\vspace{-1em}
\end{figure}

\begin{figure}
\centering
\includegraphics[width=0.65\linewidth]{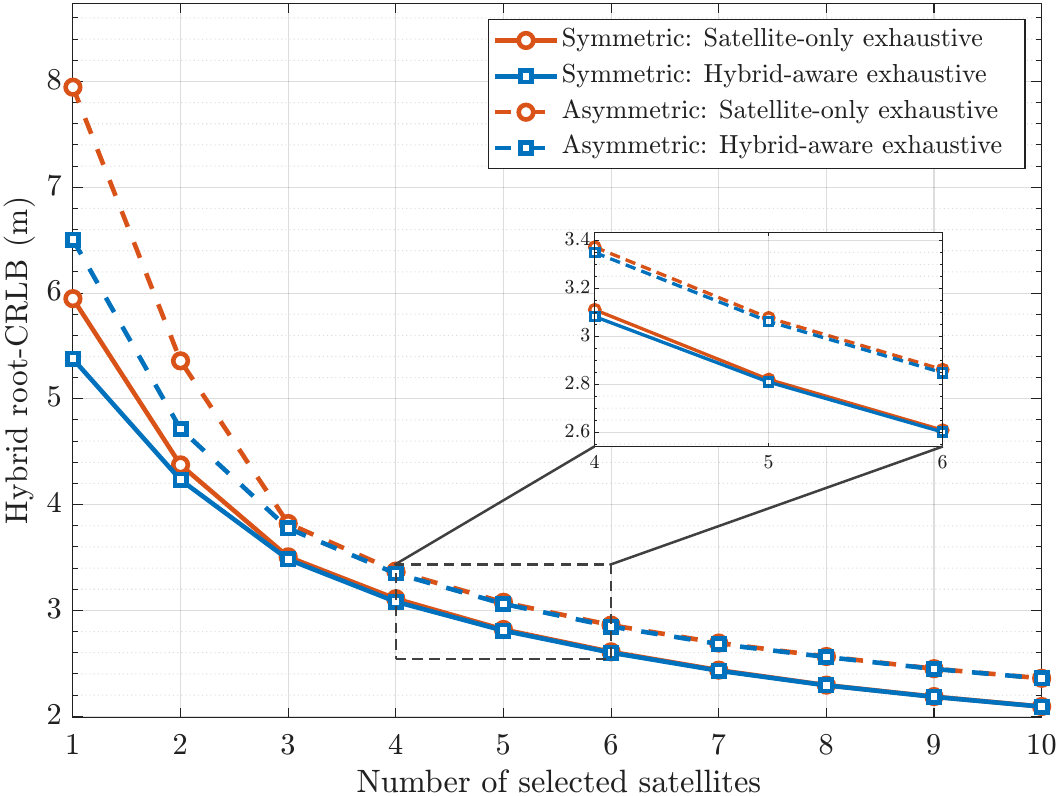} 
\vspace{-1em}\caption{Hybrid root-CRLB versus the number of selected satellites $M$ under symmetric and asymmetric terrestrial receiver deployments for satellite-only and hybrid-aware exhaustive selection.}
\vspace{-1em}
\end{figure}

\subsubsection{Geometric Effects}
Fig.~5(a) shows that increasing the number of terrestrial radar receivers $N_r$ reduces the root-CRLB. The hybrid sensing scheme consistently achieves the lowest root-CRLB, which demonstrates the benefit of combining terrestrial and satellite sensing information. Figs.~5(b) and 5(c) further show that increasing $N_{\mathrm{orb}}$ or $N_{\mathrm{sat}}$ improves the hybrid localisation accuracy by providing richer satellite observations and spatial diversity. However, the gradual flattening of both curves indicates diminishing localisation gains as the Walker constellation becomes denser.  Fig.~5(d) provides a joint view of this behaviour over the $(N_{\mathrm{orb}},N_{\mathrm{sat}})$ space, where the pronounced variation in the sparse-constellation region indicates greater sensitivity to constellation expansion, while the smoother variation in the dense region reflects progressively weaker marginal localisation gains. Following LEO deployments in \cite{b31,b32}, we set $H=500$ km as the lower bound of the considered satellite-altitude range. A higher orbit enlarges visibility but increases satellite-target distance and weakens the sensing link. Fig.~6(a) shows that the root-CRLB generally increases with $H$, indicating that the propagation-loss effect is dominant in the considered setting. The hybrid scheme consistently outperforms the satellite-only scheme by exploiting the additional terrestrial sensing information. Fig.~6(b) further shows that the terrestrial-only root-CRLB degrades rapidly with the target height $h_t$, as the predominantly ground-based receiver geometry provides increasingly limited vertical information. In contrast, the satellite-assisted branch is less sensitive to $h_t$ due to its high-elevation sensing directions, thereby enabling the hybrid scheme to substantially mitigate this degradation. This behaviour is further confirmed by the directional decomposition in Fig.~6(c), where the vertical root-CRLB remains higher than the horizontal component, identifying the vertical direction as the weaker localisation dimension.

\subsubsection{Satellite Selection}
Fig.~7(a) compares the SCNR-only, angular-only, and CRLB-oriented selection criteria. The CRLB-oriented criterion achieves the lowest root-CRLB, particularly when only a few satellites are selected, because it jointly accounts for SCNR-weighted information strength and directional complementarity with the existing hybrid FIM. Fig.~7(b) further evaluates the performance and search complexity of the greedy selection. The SCNR-only criterion exhibits zero optimality gap owing to its additive metric, whereas the CRLB-oriented gap initially increases because the greedy procedure retains earlier selections while exhaustive search independently re-optimises the subset for each $M$. As more satellites are selected, the additional sensing directions reduce the influence of these early decisions and the gap decreases towards zero. Meanwhile, exhaustive search requires up to approximately three orders of magnitude more subset evaluations than the greedy procedure. These results demonstrate that the proposed CRLB-oriented greedy selection achieves near-exhaustive localisation performance with substantially lower search complexity. Fig.~8 compares satellite-only and hybrid-aware exhaustive selection under symmetric and asymmetric terrestrial receiver deployments, where exhaustive search is adopted to isolate the effect of terrestrial-information awareness. The hybrid-aware criterion consistently achieves a lower root-CRLB by incorporating the existing terrestrial FIM when evaluating candidate satellites. As $M$ increases, the performance gap narrows because the larger satellite subset naturally provides richer spatial diversity. The asymmetric deployment generally yields a higher root-CRLB due to its more directionally imbalanced terrestrial sensing geometry. More importantly, the larger gap between satellite-only and hybrid-aware selection under this deployment indicates that terrestrial-information awareness becomes more beneficial when weakly informed directions are more pronounced, as hybrid-aware selection can preferentially choose satellites that provide complementary information along these directions. These results highlight the importance of joint terrestrial-satellite geometry in satellite selection.

\vspace{-0.8em}
\section{Conclusion}
This paper developed a cooperative LEO-terrestrial multistatic ISAC framework for 3-D target localisation. A PPP-based stochastic model enabled tractable hybrid CRLB approximations and scaling laws, revealing fundamentally different localisation gains from increasing nearby satellite density and enlarging the cooperation region. An Earth-curvature-aware Walker model was further developed to capture orbit-constrained, time-varying sensing geometry. Based on the resulting FIM structure, a CRLB-oriented greedy satellite-selection strategy was proposed to jointly exploit sensing reliability and directional information complementarity. Numerical results validated the analytical approximations and demonstrated near-exhaustive selection performance at substantially lower search complexity, with hybrid-aware selection providing greater benefits under directionally imbalanced terrestrial deployments.

\vspace{-1.5em}
\appendices
\section{Proof of Proposition 1: }
Using the second-order Campbell theorem for the PPP, the second-order term in the Taylor approximation can be expressed as
\begin{equation}
\operatorname{tr}\!\left(
\bar{\mathbf J}^{-1}\Delta\mathbf J
\bar{\mathbf J}^{-1}\Delta\mathbf J
\bar{\mathbf J}^{-1}
\right)
=
\sum_{i\in\{x,y,z\}}
\sum_{j\in\{x,y,z\}}
A_i^2A_j(\Delta J_{ij})^2,
\end{equation}
where $\bar{\mathbf J}^{-1}=\operatorname{diag}(A_x,A_y,A_z)$.

Therefore, the second moments can be obtained using $Q_{ij}
    = \mathbb E_{\rho,\phi_s} \left[J_{s,ij}^2(\rho,\phi_s)\right]$ as
\begin{align}
    Q_{xx}
    &=
    N_r^2
    \left(
    \frac{3}{8}\xi_{\omega aa}
    +
    \frac{c_r^2}{2}\xi_{\omega a}
    +
    \frac{c_r^4}{4}\xi_{\omega}
    \right), 
    \\
    Q_{xy} 
   & =
    \frac{N_r^2}{8}
    \xi_{\omega aa},\quad
    Q_{xz}
    =
    \frac{N_r^2}{2}
    \xi_{\omega a\zeta},  \quad  Q_{zz}=
    N_r^2
    \xi_{\omega\zeta},
\end{align}

By the same azimuthal-averaging argument, $Q_{yy}=Q_{xx}$ and $Q_{yz}=Q_{xz}$ where the second-order SCNR-weighted radial moments in the second moments are defined as
\begin{align}
    \xi_{\omega}
    &=
    \mathbb E_{\rho}
    \left[
    \omega^2(\rho)
    \right], \quad
    \xi_{\omega a}
    =
    \mathbb E_{\rho}
    \left[
    \omega^2(\rho)a_s^2(\rho)
    \right],\\
    \xi_{\omega aa}
    &=
    \mathbb E_{\rho}
    \left[
    \omega^2(\rho)a_s^4(\rho)
    \right],\quad
    \xi_{\omega \zeta}
    =
    \mathbb E_{\rho}
    \left[
    \omega^2(\rho)\zeta_s^4(\rho)
    \right],\\
    \xi_{\omega a\zeta}
    &=
    \mathbb E_{\rho}
    \left[
    \omega^2(\rho)a_s^2(\rho)\zeta_s^2(\rho)
    \right],
\end{align}

By using the function $ \mathcal I(\nu)$, the second-order radial moments can be expressed as
\begin{align}
    \xi_{\omega}
    &=
    \left(K_{\rm eff}^{\rm sat}\right)^2
    \mathcal I(-2),\\
    \xi_{\omega a}
    &=
    \left(K_{\rm eff}^{\rm sat}\right)^2
    \left[
    \mathcal I(-2)
    -
    h^2\mathcal I(-3)
    \right],\\
    \xi_{\omega aa}
    &=
    \left(K_{\rm eff}^{\rm sat}\right)^2
    \left[
    \mathcal I(-2)
    -
    2h^2\mathcal I(-3)
    +
    h^4\mathcal I(-4)
    \right],\\
    \xi_{\omega a\zeta}
    &=
    \left(K_{\rm eff}^{\rm sat}\right)^2
    \Big[
    b_r^2
    \left(
    \mathcal I(-2)-h^2\mathcal I(-3)
    \right)
    \nonumber\\
    &\quad
    -
    2b_rh
    \left(
    \mathcal I(-5/2)-h^2\mathcal I(-7/2)
    \right)
    \nonumber\\
    &\quad
    +
    h^2
    \left(
    \mathcal I(-3)-h^2\mathcal I(-4)
    \right)
    \Big],\\
    \xi_{\omega \zeta}
    &=
    \left(K_{\rm eff}^{\rm sat}\right)^2
    \Big[
    b_r^4\mathcal I(-2)
    -
    4b_r^3h\mathcal I(-5/2)
    \nonumber\\
    &\quad
    +
    6b_r^2h^2\mathcal I(-3)
    -
    4b_rh^3\mathcal I(-7/2)
    +
    h^4\mathcal I(-4)
    \Big].
\end{align}

Substituting these second-order moments into $Q_{ij}$ completes the proof.
\vspace{-0.8em}
\section{Proof of Theorem 1}

For fixed \(D_{\max}\) and \(H\), the quantities \(\mu_x\) and \(\mu_z\) are independent of \(\bar K\). For the $\bar K$-scaling analysis, we consider $\bar K\to\infty$ while keeping $D_{\max}$ and $H$ fixed. The first term asymptotically becomes $A_x + A_y + A_z
    \sim
    \frac{1}{\bar K}
    \left(
        \frac{2}{\mu_x}
        +
        \frac{1}{\mu_z}
    \right)$, since $ A_x = \mathcal O(\bar K^{-1}), A_y = \mathcal O(\bar K^{-1}),$ and $ A_z = \mathcal O(\bar K^{-1})$, each term in $\mathcal T_{\mathrm{taylor}}$ contains a product of three inverse-FIM diagonal entries. Therefore, the correction term can be expressed as $\mathcal T_{\mathrm{taylor}} =\mathcal O(\bar K^{-3})$, and $  \bar K \mathcal T_{\mathrm{taylor}}
 = \mathcal O(\bar K^{-2}),$ which decays faster than the first-order CRLB. Therefore, the root-CRLB decreases proportionally to  $1/\sqrt{ {\bar{K}}}$, which completes the proof.

\vspace{-0.8em}
\section{Proof of Lemma 1}
Using $  \delta_D = \frac{D_{\max}^2}{h^2}$, the horizontal satellite-assisted FIM contribution in (26) can be rewritten as
\begin{equation}
\begin{aligned}
&\bar K(D_{\max})\mu_x(D_{\max}) \\
&\quad =
\frac{
    \lambda_s\pi N_{\mathrm r}
    K_{\mathrm{eff}}^{\mathrm{sat}}
}{2}
\left[
    (1+c_r^2)\ln(1+\delta_D)
    -
    \frac{\delta_D}{1+\delta_D}
\right],
\end{aligned}
\end{equation}

For $D_{\max}\ll h$, we have
\begin{equation}
\begin{aligned}
&\bar K(D_{\max})\mu_x(D_{\max}) \\
&\quad =
\frac{
    \lambda_s\pi N_{\mathrm r}
    K_{\mathrm{eff}}^{\mathrm{sat}}
}{2}
\left[
    c_r^2\delta_D
    +
    \frac{1-c_r^2}{2}\delta_D^2
    +
    \mathcal O(\delta_D^3)
\right],
\end{aligned}
\end{equation}

Hence, $ \bar K(D_{\max})\mu_x(D_{\max}) = \lambda_s\pi N_{\mathrm r}
K_{\mathrm{eff}}^{\mathrm{sat}}\mathcal O\left( \frac{D_{\max}^2}{h^2}
\right)$. 

For $D_{\max}\gg h$,  $ \frac{\delta_D}{1+\delta_D} =  1+\mathcal O(\delta_D^{-1})$, which gives
\begin{equation}
\begin{aligned}
&\bar K(D_{\max})\mu_x(D_{\max}) \\
&\quad =
\frac{
    \lambda_s\pi N_{\mathrm r}
    K_{\mathrm{eff}}^{\mathrm{sat}}
}{2}
\left[
    (1+c_r^2)\ln(1+\delta_D)
    -
    1
    +
    \mathcal O(\delta_D^{-1})
\right],
\end{aligned}
\end{equation}

The logarithmic term dominates as $\delta_D\rightarrow\infty$,
which yields
\begin{equation}
\bar K(D_{\max})\mu_x(D_{\max})
=
\lambda_s\pi N_{\mathrm r}
K_{\mathrm{eff}}^{\mathrm{sat}}
\mathcal O\left[
    \ln\left(
        1+\frac{D_{\max}^2}{h^2}
    \right)
\right].
\end{equation}

These scaling laws complete the proof.

\vspace{-0.8em}
\section{Proof of Lemma 2}
Using $  \delta_D = \frac{D_{\max}^2}{h^2}$, the vertical satellite-assisted FIM contribution can be rewritten as
\begin{equation}
\begin{aligned}
\bar K(D_{\max})\mu_z(D_{\max})
&=
\lambda_s\pi N_{\mathrm r}K_{\mathrm{eff}}^{\mathrm{sat}}
\Bigg[
b_r^2\ln(1+\delta_D) \\
&\hspace{-3em}
-4b_r\left(1-(1+\delta_D)^{-1/2}\right)
+\frac{\delta_D}{1+\delta_D}
\Bigg].
\end{aligned}
\end{equation}

For $D_{\max}\gg h$, $\delta_D\to\infty$, such that $(1+\delta_D)^{-1/2}=\mathcal{O}(\delta_D^{-1/2})$ and $\delta_D/(1+\delta_D)=1+\mathcal{O}(\delta_D^{-1})$. Therefore
\begin{equation}
\begin{aligned}
&\bar K(D_{\max})\mu_z(D_{\max})\\
&\hspace{-1em}\quad =
\lambda_s\pi N_{\mathrm r}
K_{\mathrm{eff}}^{\mathrm{sat}}
\Big[
b_r^2\ln(1+\delta_D)
-4b_r+1
+\mathcal O(\delta_D^{-1/2})
\Big].
\end{aligned}
\end{equation}

For $b_r\neq0$, the logarithmic term is dominant, which yields
\begin{equation}
\begin{aligned}
&\bar K(D_{\max})\mu_z(D_{\max})=
\lambda_s\pi N_{\mathrm r}
K_{\mathrm{eff}}^{\mathrm{sat}}
\mathcal O\left[
\ln\left(
1+\frac{D_{\max}^2}{h^2}
\right)
\right].
\end{aligned}
\end{equation}

For $b_r=0$, as $D_{\max}\rightarrow\infty$, the vertical FIM satisfies
\begin{equation}
\bar K(D_{\max})\mu_z(D_{\max})
=
\lambda_s\pi N_{\mathrm r}K_{\mathrm{eff}}^{\mathrm{sat}}
\frac{\delta_D}{1+\delta_D}
\longrightarrow
\lambda_s\pi N_{\mathrm r}K_{\mathrm{eff}}^{\mathrm{sat}},
\end{equation}

Hence, $\bar K(D_{\max})\mu_z(D_{\max}) = \lambda_s\pi N_{\mathrm r} K_{\mathrm{eff}}^{\mathrm{sat}} \mathcal O(1)$ as $ b_r=0$. This completes the proof.

\vspace{-0.8em}
\section{Proof of Proposition 2: }
We define  $\mathbf a_{o,k}(t_q)
    =
    \begin{bmatrix}
    a_{x,o,k}(t_q),
    a_{y,o,k}(t_q),
    -s_{o,k}(t_q)
    \end{bmatrix}^T$ with $s_{o,k}(t_q)
    =
    \sin\epsilon_{o,k}(H,t_q)
    \label{eq:walker_vertical_los_component}
$. The range gradient vector between target and radar can be expanded as
\begin{equation}
    \frac{\mathbf p_i-\mathbf r_n}
    {\left\|\mathbf p_i-\mathbf r_n\right\|}
    =
    \begin{bmatrix}
    -c_r\cos\theta_n\\
    -c_r\sin\theta_n\\
    b_r
    \end{bmatrix},
    \label{eq:receiver_los_components}
\end{equation}

Substituting the above gives
\begin{equation}
    \mathbf g^{\mathrm{sat}}_{o,k,n}(t_q)
    =
    \begin{bmatrix}
    a_{x,o,k}(t_q)-c_r\cos\theta_n\\
    a_{y,o,k}(t_q)-c_r\sin\theta_n\\
    b_r-s_{o,k}(t_q)
    \end{bmatrix}.
    \label{eq:walker_gradient_components}
\end{equation}

For the closed-form analysis, we adopt a large-system horizontally isotropic visible-observation approximation. For each visible satellite-time observation, the LoS unit vector $\mathbf a_{o,k}(t_q)$ in \eqref{eq:walker_satellite_los_local} can be written as
\begin{equation}
    \mathbf a_{o,k}(t_q)
    =
    \begin{bmatrix}
   - \sqrt{1-s_{o,k}^2(t_q)}
    \cos\alpha_{o,k}(t_q)\\
    -\sqrt{1-s_{o,k}^2(t_q)}
    \sin\alpha_{o,k}(t_q)\\
   - s_{o,k}(t_q)
    \end{bmatrix},
    \label{eq:walker_los_azimuth_decomposition}
\end{equation}

Therefore, under the horizontally isotropic visible-observation approximation, the satellite-assisted FIM averaged over the local horizontal azimuth can be written as
\begin{equation}
\overline{\mathbf J}^{\mathrm W}_{\mathrm{sat}}
=
\mathbb E_{\alpha_{o,k},\theta_n}
\left[
\mathbf J^{\mathrm W}_{\mathrm{sat},i}
\right]
\approx
\operatorname{diag}
\left(
\bar{\lambda}^{\mathrm{sat}}_{x},
\bar{\lambda}^{\mathrm{sat}}_{x},
\bar{\lambda}^{\mathrm{sat}}_{z}
\right),
\label{eq:isotropic_satellite_fim_diag}
\end{equation}
where
$\bar{\lambda}^{\mathrm{sat}}_{x}
=
\frac{N_r}{2}\left[(1+c_r^2)\mathcal I_0-\mathcal I_2\right]$
and
$\bar{\lambda}^{\mathrm{sat}}_{z}
=
N_r\left[\mathcal I_2-2b_r\mathcal I_1+b_r^2\mathcal I_0\right]$.

The weighted visible-observation moments are defined as
\begin{equation}
    \mathcal I_m
    =
    \sum_{q=1}^{Q}
    \sum_{o=1}^{N_{\mathrm{orb}}}
    \sum_{k=1}^{N_{\mathrm{sat}}}
    \chi_{o,k}(t_q)
    \omega^{\mathrm{sat}}_{o,k}(t_q)
    s_{o,k}^{m}(t_q),
    \quad m=0,1,2.
    \label{eq:isotropic_weighted_moments}
\end{equation}

Then effective weighted Walker moments are approximated by $ \overline{\mathcal I}_m
    \approx M_{\mathrm{vis}}\mu_m$ with $m=0,1,2$. Substituting these approximations into the above expressions completes the proof.
    
\vspace{-0.8em}
\section{Proof of Proposition 3: }
Under the horizontally isotropic visible-observation approximation, the second-order Taylor correction requires the following weighted visible moments
\begin{equation}
    \nu_r
    =
    \frac{
    \left(K_{\mathrm{eff}}^{\mathrm{sat}}\right)^2
    }
    {1-x_0}
    \int_{x_0}^{1}
    \frac{
    (R_sx-R_0)^r
    }
    {
    D(x)^{r/2+2}
    }
    dx,
    \hspace{+0.5em}
    r=0,1,2,3,4.
    \label{eq:second_order_visible_moments_integral}
\end{equation}

The corresponding second-order moments of the Walker hybrid FIM fluctuations are defined as
\begin{equation}
    \mathcal M_{ij}
    =
    \mathbb E_{\boldsymbol{\xi}}
    \left[
        \left(
            \Delta J_{\mathrm{hyb},ij}^{\mathrm W}
            (\boldsymbol{\xi})
        \right)^2
    \right],
    \qquad
    i,j\in\{x,y,z\}.
\end{equation}

The closed-form evaluation neglects the cross-observation covariances. Under this approximation, the diagonal second-order moments are
\begin{equation}
\begin{aligned}
    \mathcal M_{xx}
    &=
    M_{\mathrm{vis}}N_r^2
    \bigg[
    \frac{3}{8}
    \left(
    \nu_0-2\nu_2+\nu_4
    \right)
    +
    \frac{c_r^2}{2}
    \left(
    \nu_0-\nu_2
    \right)
    +
    \frac{c_r^4}{4}\nu_0
    \bigg]  \\
    &\quad
    -
    M_{\mathrm{vis}}N_r^2
    \left[
    \frac{(1+c_r^2)\mu_0-\mu_2}{2}
    \right]^2,
\end{aligned}
\label{eq:visible_Qxx}
\end{equation}
\begin{equation}
\begin{aligned}
    \mathcal M_{zz}
    &=
    M_{\mathrm{vis}}N_r^2
    \left[
    \nu_4
    -
    4b_r\nu_3
    +
    6b_r^2\nu_2
    -
    4b_r^3\nu_1
    +
    b_r^4\nu_0
    \right]  \\
    &\quad
    -
    M_{\mathrm{vis}}N_r^2
    \left[
    \mu_2
    -
    2b_r\mu_1
    +
    b_r^2\mu_0
    \right]^2.
\end{aligned}
\label{eq:visible_Qzz}
\end{equation}

The off-diagonal second-order moments are
\begin{equation}
     \mathcal M_{xy}
    =
    \frac{M_{\mathrm{vis}}N_r^2}{8}
    \left(
    \nu_0-2\nu_2+\nu_4
    \right),
    \label{eq:visible_Qxy}
\end{equation}
\begin{equation}
\begin{aligned}
    \mathcal M_{xz}
    &=
    \frac{M_{\mathrm{vis}}N_r^2}{2}
    \big[
    b_r^2\nu_0
    -
    2b_r\nu_1
    +
    (1-b_r^2)\nu_2  \\
    &\qquad
    +
    2b_r\nu_3
    -
    \nu_4
    \big].
\end{aligned}
\label{eq:visible_Qxz}
\end{equation}

Substituting $\nu_r$ into $\mathcal M_{ij}$ and subsequently into $\mathcal T^{\mathrm W}$ completes the proof.

\vspace{-1em}
%\appendices

\end{document}